 \newif\ifnotes\notesfalse
\definecolor{mygrey}{gray}{0.50}
\newcommand{\notename}[2]{{\textcolor{red}{\footnotesize{\bf (#1:} {#2}{\bf ) }}}}
\newcommand{\annote}[1]{{\color{red}Anindya: #1}}
\newcommand{\annnote}[1]{{\color{blue}Anindya: #1}}
\newcommand{\chnote}[1]{{\color{brown} #1}}
\newcommand{\newd}[1]{{\color{violet}#1}}
\newcommand{\newa}[1]{{\color{purple} #1}}
\newcommand{\cadnote}[1]{{\color{olive} #1}}
\newcommand{\notename}[2]{{}}
\newcommand{\annote}[1]{}
\newcommand{\annnote}[1]{}
\newcommand{\chnote}[1]{#1}
\newcommand{\newd}[1]{#1}
\newcommand{\newa}[1]{#1}
\newcommand{\cadnote}[1]{#1}
\newcommand{\anote}[1]{{\notename{Aravindan}{#1}}}
\newcommand{\cnote}[1]{{\notename{Aidao}{#1}}}
\newtheorem*{theorem*}{Theorem}
\newtheorem*{hypothesis*}{Hypothesis}
\newtheorem{claim}[theorem]{Claim}
\newtheorem*{claim*}{Claim}
\newtheorem*{observation*}{Observation}
\newcommand{\eat}[1]{}
\newcommand{\mathset}[1]{\{ #1 \} }
\newcommand{\setsize}[1]{ | #1 | }
\newcommand{\poly}{\mathsf{poly}}
\newcommand{\bbF}{\mathbb{F}_2}
\newcommand{\R}{\mathbb{R}}
\newcommand{\bbP}{\mathbb{P}}
\newcommand{\iprod}[1]{\langle#1\rangle}
\newcommand{\Amax}{A_{0}}
\newcommand{\Amin}{A_{1}}
\newcommand{\dmax}{d_{0}}
\newcommand{\dmin}{d_{1}}
\newcommand{\wfirst}{w_0}
\newcommand{\wsecond}{w_1}
\newcommand{\wnaught}{{w_{min}}}
\newcommand{\Ifirst}{I_0}
\newcommand{\Isecond}{I_1}
\newcommand{\WLOG}{Without loss of generality}
\newcommand{\lowerwlog}{without loss of generality}
\newcommand{\sampleo}{\chnote{\mathcal{O}(\Amax,\Amin,\wfirst,\wsecond)}}
\newcommand{\sampleouv}{\chnote{\mathcal{O}(U, V, w_U, w_V)}}
\newcommand{\Dsampleouv}{\chnote{\mathcal{O}(DU, DV, w_U, w_V)}}
\newcommand{\alglargediff}{\textsc{Subspace-Recover-Large-Diff}}
\newcommand{\algtestcompr}{\textsc{Test-Comparability}}
\newcommand{\algincprrcv}{\textsc{Incomparable-Subspace-Recovery}}
\newcommand{\algfindagoodprojector}{\textsc{Find-A-Good-Projector}}
\newcommand{\algchoosetherighthypo}{\textsc{Choose-The-Right-Hypothesis}}
\newcommand{\zero}{\mathsf{zero}}
\newcommand{\rank}{\mathsf{rank}}
\newcommand{\bx}{\mathbf{x}}
\newcommand{\by}{\mathbf{y}}
\newcommand{\bz}{\mathbf{z}}
\newcommand{\bZ}{\chnote{{\mathbf{Z}}}}
\newcommand{\bD}{\chnote{{\mathbf{D}}}}
\newcommand{\penclose}[1]{\left( #1 \right) }
\newcommand{\bfM}{\mathbf{M}}
\newcommand{\bfT}{\mathbf{T}}
\newcommand{\bft}{\mathbf{t}}
\newcommand{\bfy}{\mathbf{y}}
\newcommand{\bfx}{\mathbf{x}}
\newcommand{\bfb}{\mathbf{b}}
\newcommand{\sfspan}{\mathsf{span}}
\newcommand{\calE}{\mathcal{E}}
\newcommand{\qqquad}{\qquad\qquad\qquad\qquad\qquad\qquad\qquad\qquad\qquad}
\newcommand{\True}{\ensuremath{\mathsf{True}}}
\newcommand{\False}{\ensuremath{\mathsf{False}}}
\newcommand{\sfRM}{\ensuremath{\mathsf{RM}}}
\DeclarePairedDelimiter\ceil{\lceil}{\rceil}
\newcommand{\spann}{\mathsf{span}}
\newenvironment{proofof}[1]{\noindent {\bf Proof
of #1.} }
{\hfill $\blacksquare$ \bigskip}
\title{Learning a mixture of two subspaces over finite fields}
\thanks{Supported by NSF grants CCF 1910534 and CCF 1926872. Part of the work was done while visiting the Simons Institute for Theory of Computing for the program ``Probability, Geometry and Computation in High Dimensions".} \Email{anindyad@seas.upenn.edu}\\
\thanks{Supported by NSF grants CCF-1652491, CCF-1637585 and CCF-1934931.}\Email{aravindv@northwestern.edu}\\
\date{}
\begin{document}

\maketitle

\thispagestyle{empty}

\begin{abstract}
  We study the problem of learning a mixture of two subspaces over \newd{$\bbF^n$}. 
  The goal is to recover  the individual subspaces $A_0, A_1$, 
  given samples from a (weighted) mixture of samples drawn uniformly from the subspaces $A_0$ and $A_1$. 
  This problem is computationally challenging, as it captures the notorious problem of ``learning parities with noise" 
  in the degenerate setting when $\Amin \subseteq \Amax$. 
  This is in contrast to the analogous problem over the reals that can be solved in polynomial time (Vidal'03). This leads to the following natural question: 
  \newd{{\em is Learning Parities with Noise the only computational barrier in obtaining efficient algorithms for learning mixtures of subspaces over $\bbF^n$?}}   
  
  The main result of this paper is an affirmative answer to the above question. Namely, we show the following results: 
  \begin{enumerate}
      \item When the subspaces $A_0$ and $A_1$ are \emph{incomparable}, i.e., $A_0 \not \subseteq A_1$ and $A_1 \not \subseteq A_0$, then there is a polynomial time algorithm to recover the subspaces $A_0$ and $A_1$. 
      \item \newa{In the case when $\Amin \subseteq \Amax$ such that $\mathsf{dim}(\Amin) \le \alpha \cdot \mathsf{dim}(\Amax)$ for $\alpha<1$, there is a $n^{O(1/(1-\alpha))}$ time algorithm to recover the subspaces $\Amax$ and $\Amin$.} 
  \end{enumerate}
 
 Thus, our algorithms imply computational tractability of the problem of learning mixtures of two subspaces, except in the degenerate setting captured by learning parities with noise. 
 
  

  

  \end{abstract}

\begin{keywords}%
mixture models,  subspaces, learning parities with noise
\end{keywords}

\section{Introduction}
Mixture models form an expressive class of probabilistic models that are widely used to find structure in unlabeled data from a heterogeneous population. Each of the $k$ components in a mixture model represents one of the $k$ sub-populations \newd{(assumed to be homogeneous)} that constitute the overall heterogeneous population.  
A variety of mixture models ranging from Gaussian mixture models and mixtures of product distributions over continuous domains, to mixtures of ranking models, mixtures of subcubes over discrete domains are used to capture data in different domains. There is an extensive literature in statistics and computer science that gives efficient polynomial time algorithms for learning many mixture models with a constant number of mixture components~\citep{FOS06,KMV10, MV10, BS10, RSS, LRSS15, ABSheffet, LM18, CM19}. 

A common assumption in high-dimensional data analysis is to assume that the given data belong to a collection of lower dimensional subspaces.
A prominent line of work in machine learning, computer vision and computational geometry~\citep{Vidal, EV, CandesE,Sanghavi} that formalizes this intuition is the problem of learning a mixture of subspaces (or subspace clustering).  Given a set of points in $n$ dimensions that belong to a union of $k\ge 2$ subspaces,  the goal is to find the individual subspaces that contain all the points.
When the points belong to $\R^n$, a beautiful result of \citet{Vidal} shows that for any mixture of $k$ subspaces, under some mild general-position assumption of the points in the subspaces,\footnote{Such an assumption is necessary, to ensure that the individual subspaces are identifiable.} there is an algorithm that runs in time $n^{O(k)}$ that recovers the $k$ individual subspaces. Very recently, subspace clustering has also been studied with outlier noise, in the special case when the points in each cluster is drawn from a Gaussian supported on a subspace~\citep{RaghavendraYao20, BakshiKothari20}.  
However these guarantees are specific to the real domain. A natural question is whether such algorithmic guarantees also extend to other domains like $\bbF$. 

\vspace{5pt}
{\em Can we efficiently learn a mixture of subspaces over finite fields?}
\vspace{5pt}

The algorithmic problem has a very different flavor over finite fields and becomes computationally challenging even in simple settings. 
In the simplest setting, we are given samples from a mixture of $k=2$ unknown subspaces $A_0, A_1 \subseteq \bbF^n$ \cadnote{of dimension $d_0,d_1$ (respectively)}, with unknown mixing weights $w_0, w_1 \in [0,1]$ that add up to $1$. Each sample is drawn independently as follows: with probability $w_0$, the sample is drawn from $\mathsf{U}_{A_0}$, the uniform distribution over subspace $A_0 \subseteq \bbF^n$, and with $w_1$ the sample is drawn from the uniform distribution $\mathsf{U}_{A_1}$ over $A_1 \subseteq \bbF^n$. The goal is to learn the individual subspaces $A_0, A_1$ from independent samples generated from this model.  We refer the reader to Definition~\ref{def:model} for the formal definition of the model. 

\newd{Learning mixtures of subspaces over $\bbF$ essentially generalizes the problem of learning mixtures of subcubes that was studied in \citep{CM19}. In particular, subcubes correspond to (affine) subspaces where the constraints are given by  standard unit vectors. On the other hand, in this work, we consider arbitrary subspaces of $\bbF^n$ (though we do not allow for affine subspaces).
Our work can also be through the framework of  \emph{learning from positive examples}~\cite{denis2005learning, de2014learning, canonne2020learning, ernst2015algorithmic} which studies the learnability of supervised concept classes (in this case subspaces) when the algorithm only gets positive samples.}

More interestingly, the simple setting of $k=2$ already captures the notorious problem of learning parities with noise (LPN) as a special case. One can encode LPN as learning a mixture of two subspaces $A_0, A_1$ where the subspaces $A_1 \subset A_0 \subseteq \bbF^n$ and $\text{dim}(A_1) = \text{dim}(A_0)-1$ (see Proposition~\ref{prop:reduce_mix_to_LPN} and Proposition~\ref{prop:reduce_LPN_to_mix}). 
The best known algorithm for LPN runs in time $\exp\big(O(n/\log n)\big)$ \citep{BKW}. Moreover LPN is also used as an average-case hardness assumption in learning theory and cryptography ~\citep{LPNsurvey}. To avoid this computational barrier, we will assume that we are not in the degenerate setting when one subspace contains the other.  We call the two subspaces $A_0$ and $A_1$ {\em incomparable} iff $A_0 \nsubseteq A_1$ and $A_1\nsubseteq A_0$. This leads to the following natural  question about the computational complexity of the problem:

 \vspace{5pt}
 \noindent {\bf Question.} {\em Is LPN the only computational obstruction for learning a mixture of two subspaces? Can one design faster algorithms when the subspaces $A_0, A_1$ are incomparable?}
\vspace{5pt}

Our first
result shows that one can indeed design a polynomial time algorithm when the two subspaces are incomparable.
\begin{restatable}{theorem}{restatefastincprrecovery}\label{thm:intro:main}\label{thm:incpr_rcv}
There is an algorithm \algincprrcv{} with the following guarantee: given oracle access to $\sampleo$ (for unknown $\Amax,\Amin,\wfirst,\wsecond$), $\wnaught> 0$ (such that $\wnaught\le\min\mathset{\wfirst,\wsecond}$) and confidence parameter $\delta>0$,
\begin{enumerate}
    \item \algincprrcv{} runs in sample and time complexity $\poly(n/\wnaught)\cdot\log(1/\delta)$
    \item With probability $1-\delta$, the algorithm outputs the subspaces $\Amax,\Amin$, and estimates the weights $w_0, w_1$ up to any desired inverse polynomial accuracy.
\end{enumerate}
\end{restatable}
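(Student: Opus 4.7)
The plan is to reduce recovery to finding a single nontrivial Fourier coefficient of the mixture $M$, and then bootstrap.

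\textbf{Fourier setup.} The $\mathbb{F}_2$-span of $O(n)$ samples from $\sampleo$ equals $\Amax + \Amin$ with high probability; restricting the problem to this span, WLOG $\Amax + \Amin = \bbF^n$, so $\Amax^{\perp} \cap \Amin^{\perp} = \{0\}$. The Fourier coefficient $\hat{M}(v) := \mathbb{E}_{x \sim M}[(-1)^{\langle v, x \rangle}]$ then equals $\wfirst$ on $\Amax^{\perp} \setminus \{0\}$, $\wsecond$ on $\Amin^{\perp} \setminus \{0\}$, and $0$ elsewhere. Any single $\hat{M}(v)$ can be estimated to accuracy $\wnaught/3$ using $O(\wnaught^{-2} \log(1/\delta))$ samples via Chernoff, so membership of $v$ in $\Amax^{\perp} \cup \Amin^{\perp}$ is reliably decidable.

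\textbf{Bootstrap from any Fourier coefficient.} Given any $v \in \Amax^{\perp} \setminus \{0\}$ (the symmetric case is analogous), conditioning the samples on $\langle v, x \rangle = 1$ rejects every $\Amax$-sample (since $v$ annihilates $\Amax$), leaving samples drawn uniformly from the affine coset $\Amin \cap \{y : \langle v, y \rangle = 1\}$ of $\Amin \cap \ker v$. Translating these by any one such sample and taking the $\mathbb{F}_2$-span recovers $\Amin \cap \ker v$; adjoining a coset representative yields $\Amin$. With $\Amin$ known, samples outside $\Amin$ are pure $\Amax \setminus \Amin$ samples (occurring with probability at least $\wnaught/2$ by the incomparability of $\Amax, \Amin$), and their $\mathbb{F}_2$-span is $\Amax$ after a coupon-collector-style argument on the cosets of $\Amax \cap \Amin$ inside $\Amax$. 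The weights are then read off from empirical fractions with a straightforward correction. This step uses $\poly(n/\wnaught) \log(1/\delta)$ samples and time.

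\textbf{The hard step: producing such a $v$.} Uniformly sampling $v \in \bbF^n$ succeeds with probability $(2^{n-\dmax} + 2^{n-\dmin} - 1)/2^n \le 2 \cdot 2^{-\min(\dmax,\dmin)}$, which is exponentially small in general. The subroutine \algfindagoodprojector{} constructs a linear surjection $\pi: \bbF^n \to \bbF^m$ with $m = O(\log(n/\wnaught))$ such that $\pi(\Amax)$ and $\pi(\Amin)$ remain distinct, incomparable proper subspaces of $\bbF^m$. Since $|\bbF^m| = \poly(n/\wnaught)$, one can enumerate every $v' \in \bbF^m$, estimate the projected Fourier coefficient $\mathbb{E}_{x \sim M}[(-1)^{\langle v', \pi(x) \rangle}]$ to accuracy $\wnaught/3$, and thereby identify a nonzero $v' \in \pi(\Amax)^{\perp} \cup \pi(\Amin)^{\perp}$. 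Using $\langle \pi^{\top}(v'), x \rangle = \langle v', \pi(x) \rangle$ and the injectivity of $\pi^{\top}$, this lifts to the desired $v := \pi^{\top}(v') \in \Amax^{\perp} \cup \Amin^{\perp} \setminus \{0\}$.

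\textbf{Main obstacle.} The real work lies in \algfindagoodprojector{}: a random $\pi$ with small target dimension collapses both $\pi(\Amax), \pi(\Amin)$ onto all of $\bbF^m$, wiping out the information we need. I expect the construction to proceed iteratively, growing the image of $\pi$ one linear functional at a time; \algtestcompr{} is used at each step to certify that the running pair of projected subspaces stays incomparable, and \alglargediff{} handles branches in which a large dimensional or weight gap between the projected subspaces enables more direct recovery. Because intermediate choices may yield multiple candidate subspace pairs, \algchoosetherighthypo{} uses a fresh sample batch to compare empirical and predicted distributions and to pick the correct candidate. Putting everything together gives the claimed $\poly(n/\wnaught) \log(1/\delta)$ sample and time bounds.
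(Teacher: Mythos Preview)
Your Fourier framework and the bootstrap step are correct and give a clean alternative to what the paper does after dimension reduction. The paper, once it has a map $M\in\bbF^{10\times n}$ with $M\Amax,M\Amin$ incomparable, simply brute-forces over all pairs of subspaces of $\bbF^{10}$ and uses \algchoosetherighthypo{} (hypothesis testing) to pick the correct pair; it then recovers $\Amax$ as the span of the samples $x$ with $Mx\notin M\Amin$. Your route---enumerate $\bbF^m$, read off a nonzero Fourier coefficient to get a single annihilator $v$, and purify via $\langle v,x\rangle=1$---is a legitimate and arguably more elegant substitute for that low-dimensional endgame. One small repair: the paper does not guarantee that $\pi$ is surjective, so $\pi^{\top}$ need not be injective. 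This is harmless: by incomparability there exists $v'\in\pi(\Amax)^{\perp}\setminus\pi(\Amin)^{\perp}$, and for such $v'$ one has $\langle \pi^{\top}v',x\rangle\ne 0$ for some $x\in\Amin$, so $v=\pi^{\top}v'\ne 0$ automatically.

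The real gap is exactly where you flag it, and your speculation about \algfindagoodprojector{} is off in several ways. The construction does \emph{not} grow the image of $\pi$ row by row; that cannot work, because for any $m<\min(\dmax,\dmin)$ a random $\pi\in\bbF^{m\times n}$ sends both subspaces onto all of $\bbF^m$ with overwhelming probability, so you would never reach an incomparable pair at small $m$. The paper instead \emph{shrinks}: it starts at $M=I_n$ and in each round replaces $M$ by $\bfT M$ for a uniformly random $\bfT\in\bbF^{(i-1)\times i}$, using \algtestcompr{} to reject any $\bfT$ that destroys incomparability. The key (and only) quantitative fact needed is that a single random dimension-drop preserves incomparability with probability at least $9/128$ (proved by a short case analysis on whether $\sfspan(U\cup V)$ already has codimension $\ge 1$); \algtestcompr{} then lets you resample $\bfT$ until one succeeds, so each of the $n-10$ rounds costs only $O(\log n)$ trials. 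Neither \alglargediff{} nor \algchoosetherighthypo{} plays any role inside \algfindagoodprojector{}; \alglargediff{} is a separate algorithm for the comparable case and is never invoked here.
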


Hence the above result gives a significantly faster polynomial
time algorithm if we are {\em not} in the degenerate {\em comparable} setting when one subspace contains the other. In contrast, when \newd{$A_1 \subset A_0$ and $dim(A_1)=dim(A
_0)-1$ (or vice versa)},  the best known algorithm takes $\exp({O}(n/\log n))$ time. We remark that the algorithm succeeds in uniquely identifying and recovering the individual subspaces, as opposed to just finding a mixture of two subspaces that fits the data. In the parlance of statistics, our algorithm recovers the underlying model (sometimes referred to as \emph{parameter estimation}) as opposed to just doing \emph{density estimation}.

Next, observe that the (presumed) hardness of LPN only implies hardness of the subspace recovery problem when (i) $A_1 \subseteq A_0$  and (ii) $\mathsf{dim}(A_1) = \mathsf{dim}(A_0) -1$. This naturally prompts the question whether subspace recovery remains hard if (say) $A_1 \subseteq A_0$ but $\mathsf{dim}(A_1)\ll \mathsf{dim}(A_0)$. In other words, we ask the following question: 
 
 \vspace{5pt}
 \noindent {\bf Question.} {\em Can we design fast algorithms for subspace recovery when $\mathsf{dim}(A_0)$ and $\mathsf{dim}(A_1)$ are substantially different?
Note that we are not imposing any conditions on the comparability of the hidden subspaces $A_0$ and $A_1$. }
\vspace{5pt}

Our next result provides an affirmative answer to this question. 
\begin{restatable}{theorem}{restatedversusalphad}\label{thm:intro:dversusalphad}\label{thm:dversusalphad}
Let  $\wnaught \ge 1/100$.  
Let $d_0 \ge d_1$ and suppose $\alpha:=\dmin/\dmax < 1- \frac{\log \dmax}{\sqrt{\dmax}}$. There is an algorithm \alglargediff{} with the following guarantee: given oracle access to $\sampleo$(for unknown $\Amax,\Amin,\wfirst,\wsecond)$, $\wnaught> 0$ (such that $\wnaught\le \min\mathset{\wfirst,\wsecond}$) and confidence parameter $\delta>0$,
\begin{enumerate}
    \item \alglargediff{} runs in sample and time complexity
    \newline 
    $\log(1/\delta)\poly(n)\cdot \dmax^{O(1)/(1-\alpha)}$. 
    \item With probability $1-\delta$, the algorithm outputs the subspaces $\Amax,\Amin$, and estimates the mixing weights up to any desired inverse polynomial accuracy.
\end{enumerate}
\end{restatable}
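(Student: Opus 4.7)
The strategy is to reduce to the task of finding the dual subspace $A_1^{\perp}$ inside $A_0$, and then enumerate a structured family of $d_0^{O(d_0/g)}$ candidate vectors, where $g := d_0-d_1=(1-\alpha)d_0$ is the ``gap.''

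\smallskip
\emph{Stage 1 (recover $A_0$).} Since $w_0\ge\wnaught\ge 1/100$, a constant fraction of samples are drawn uniformly from $A_0$, so the linear span of $\poly(d_0,\log(1/\delta))$ samples equals $A_0$ with probability $\ge 1-\delta/3$. After a linear change of basis, assume $A_0=\bbF^{d_0}$. A simple test on held-out samples tells us whether every sample lies in $A_0$: if not, the subspaces are incomparable and we dispatch to Theorem~\ref{thm:incpr_rcv} to finish in polynomial time. Otherwise we proceed under $A_1\subseteq A_0$.

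\smallskip
\emph{Stage 2 (Fourier verification oracle).} For the mixture $f=w_0\,\mathrm{Unif}(\bbF^{d_0})+w_1\,\mathrm{Unif}(A_1)$, the Fourier coefficient at any nonzero $v\in\bbF^{d_0}$ satisfies
\[
\hat f(v)\;=\;\mathbb{E}_x\bigl[(-1)^{\langle v,x\rangle}\bigr]\;=\;w_1\cdot\mathbb{1}[v\in A_1^{\perp}].
\]
Hence any candidate $v$ can be verified for membership in $A_1^{\perp}$ by estimating $\hat f(v)$ to additive error $w_1/4$ using $O(\log(1/\delta)/w_1^2)$ fresh samples. The remaining task is to produce a list of at most $d_0^{O(d_0/g)}=d_0^{O(1/(1-\alpha))}$ candidate vectors guaranteed to contain a spanning set of the $g$-dimensional subspace $A_1^{\perp}$.

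\smallskip
\emph{Stage 3 (structured enumeration).} The natural structured family is indexed by $c$-tuples of samples for $c=O(d_0/g)$: for each such tuple $(x_{i_1},\dots,x_{i_c})$, compute the orthogonal complement of $\mathsf{span}(x_{i_1},\dots,x_{i_c})$ inside $\bbF^{d_0}$ (a subspace of dimension $\ge d_0-c$) and add each vector of its canonical (RREF) basis to the candidate list. Taking $m=\poly(d_0/\wnaught)$ samples, the total number of candidates is $\binom{m}{c}\cdot(d_0-c)\le d_0^{O(d_0/g)}$, matching the theorem. For any $c$-tuple drawn entirely from $A_1$ (which happens for a $w_1^c$-fraction of tuples, yielding $\binom{m}{c}w_1^c=d_0^{\Omega(d_0/g)}$ good tuples in expectation), the complement contains $A_1^{\perp}$, and accumulating vectors that pass the Fourier test from sufficiently many such tuples will eventually produce $g$ linearly independent members of $A_1^{\perp}$, at which point $A_1=(A_1^{\perp})^{\perp}$ is recovered.

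\smallskip
\emph{Main obstacle.} The delicate step is establishing that enough candidates from genuinely ``good'' (all-$A_1$) tuples pass the verification, while spurious candidates from bad tuples are ruled out. This requires a probabilistic argument combining (i) a lower bound on the number of good $c$-tuples whose spans collectively cover $A_1$, (ii) a union bound showing that, with high probability, none of the bad tuples produces a vector whose empirical Fourier mass exceeds $w_1/4$, and (iii) balancing the enumeration cost $d_0^{O(d_0/g)}$ against the per-candidate verification cost $O(\log(1/\delta)/w_1^2)$. The hypothesis $1-\alpha\ge \log d_0/\sqrt{d_0}$ (equivalently $g\ge \sqrt{d_0}\log d_0$) is exactly the regime where this union-bound calculation just barely closes, giving the final complexity $\log(1/\delta)\cdot\poly(n)\cdot d_0^{O(1)/(1-\alpha)}$.
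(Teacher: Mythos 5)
Your Stages 1--2 are sound: reducing to $A_1\subseteq A_0=\bbF^{d_0}$ is standard (incomparability is handled by Theorems~\ref{thm:test_comparability} and~\ref{thm:incpr_rcv}), and the Fourier identity $\hat f(v)=w_1\cdot\mathbb{1}[v\in A_1^\perp]$ for nonzero $v$ is correct and gives a clean polynomial-sample membership test for $A_1^\perp$. Stage~3, however, has a genuine gap. You assert that for a $c$-tuple drawn entirely from $A_1$, some vector of the RREF basis of $U^\perp$ (where $U$ is the tuple's span) will pass the Fourier test, i.e.\ lie in $A_1^\perp$. This does not follow from $A_1^\perp\subseteq U^\perp$. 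The subspace $U^\perp$ has dimension $d_0-c$, while $A_1^\perp$ has dimension only $g=d_0-d_1$, so $A_1^\perp$ occupies a $2^{-(d_1-c)}$-fraction of $U^\perp$, and with $c=O(d_0/g)\le O(\sqrt{d_0}/\log d_0)$ the exponent $d_1-c$ is nearly $d_0$. The RREF basis is a fixed set of just $d_0-c$ vectors and, for a generic (or adversarial) $A_1$, contains no element of $A_1^\perp$ at all; summing the minuscule hitting probability over all $\binom{m}{c}=d_0^{O(d_0/g)}$ tuples does not help, since $d_0^{O(d_0/g)}\cdot 2^{-(d_1-O(d_0/g))}\ll 1$ in the theorem's parameter regime. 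So the candidate list simply contains no true member of $A_1^\perp$, and no amount of Fourier verification can manufacture one. Your ``main obstacle'' paragraph addresses ruling out spurious positives via a union bound; the actual failure is the absence of any true positives to find.

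The paper's proof sidesteps exactly this difficulty via a nonlinear \emph{dimension-gap amplification}, not a dual-space search. It lifts each sample $x$ to the monomial vector $\phi_\ell(x)$ of all degree-$\le\ell$ monomials with $\ell=\frac{2\log(100/\wnaught)}{1-\alpha}$ (so $\ell=O(1/(1-\alpha))$, of the same order as your $c$), and invokes a result from additive combinatorics (Lemma~\ref{lem:behl2012lemma4}, via \cite{ben2012random,keevash2005set}) showing $\dim\sfspan(\phi_\ell(A_1))=\binom{\alpha d_0}{\le\ell}$ while $\dim\sfspan(\phi_\ell(A_0))=\binom{d_0}{\le\ell}$, with the ratio bounded by $\wnaught/100$. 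After lifting, the dimension gap is so large that a plain linear-dependence test on $m=\binom{d_0}{\le\ell}=d_0^{O(1/(1-\alpha))}$ lifted samples exactly identifies (w.h.p., via events $\calE_1,\ldots,\calE_4$) which samples came from $A_1$, and then $A_1$ and $A_0$ are both recovered by taking spans. To salvage your plan, Stage~3 would need to be replaced by something of this strength; the missing idea is precisely the polynomial lift that amplifies the gap before any enumeration is attempted.
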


Informally speaking, 
if the ratio of dimensions $\alpha$ is bounded away from $1$, the running time is polynomial. In general, the running time of the algorithm has a dependence of $O(1/(1-\alpha))$ in the exponent. 

\subsection{Overview of Techniques.}

We now  briefly describe the algorithmic ideas and techniques used to prove our results.  
The algorithms that establish Theorem~\ref{thm:intro:main} and Theorem~\ref{thm:dversusalphad} use very different ideas. \newd{We begin with an overview of Theorem~\ref{thm:intro:main}.}

\paragraph{Incomparable Setting (Theorem~\ref{thm:intro:main}).}
The main component of the polynomial time algorithm in the incomparable setting is a careful procedure for {\em dimension reduction} that reduces the subspace clustering problem to $O(1)$ dimensions. We will construct a matrix $M \in \bbF^{r \times n}$ where $r=O(1)$ \newd{(in the actual proof, we set $r=10$)}, and solve the clustering problem given samples of the form $y=M x$ where $x$ is drawn from the original mixture. Note that a subspace under any linear map $M$ also gives a subspace; hence the samples in $\R^r$ are drawn from a mixture of subspaces $MA_0$ and $MA_1$. Any algorithm for learning a mixture of subspaces in  $r=O(1)$ dimensions will allow us to cluster the points, and recover the individual subspaces $\Amax, \Amin$. 

How do we choose the linear map $M$? \newd{A key property that we require of $M$ is that if $A_0$ and $A_1$ are incomparable, then $MA_0$ and $MA_1$ should also remain incomparable. While it is not hard to see that such a $M$ exists (even when $r=O(1)$), it is far from clear how to find it given that we do not have $A_0$ and $A_1$ explicitly. A natural choice for $M$ is a random matrix, where every entry is chosen independently from $\bbF$. Random linear maps are often used for dimension reduction in the real domain to approximately preserve inner products and pairwise distances. However, a random map does not work in our setting, particularly when the target dimension $r \ll \dmin$. This is because with high probability the subspaces collapse and $\bfM A_0 = \bfM A_1 = \bbF^r$, thereby making it impossible to recover the individual subspaces $\bfM A_0, \bfM A_1$.}

Our approach instead proceeds in multiple \newd{rounds}, where \newd{in each round, we reduce the dimension by one while preserving the property that the projected subspaces remain incomparable. More precisely,} one can show that for a random linear map $\bfM_{n-1} \in \bbF^{(n-1) \times n}$, with {\em constant probability}, \newd{$\bfM_{n-1} A_0$ and $\bfM_{n-1} A_1$  are incomparable} if $A_0, A_1$ are originally incomparable. However, this does not suffice per se, since we want to apply this for $\Omega(n)$ \newd{rounds} (and thus, the probability of success becomes exponentially small).  \newd{The crucial component of our algorithm is a testing procedure that runs in polynomial time, which given samples from a mixture of subspaces $U, V$,  w.h.p. outputs whether $U$ and $V$ are {\em comparable} or {\em incomparable}.} With such a procedure, in every phase we can reduce the dimension by $1$, by sampling several random linear maps, running our testing procedure on each of them, and picking one that preserves incomparability of the subspaces. 
The guarantee of the testing procedure is given below.

\begin{restatable}{theorem}{restatetestcomparability}\label{thm:test_comparability}
There is an algorithm \algtestcompr{} with the following guarantee: Given oracle access to $\mathcal{O}(U, V, w_U, w_V)$ (for unknown $U, V, w_U, w_V$), $\wnaught > 0$ (such that $ \min\{w_U, w_V\}\ge \wnaught$) and confidence parameter $\delta>0$,
\begin{enumerate}
    \item \textsc{Test-comparability} runs in sample and time complexity $1/\wnaught^2 \cdot \text{poly}(n)\log(1/\delta)$.
    \item With probability $1-\delta$, the algorithm outputs \True{} if $U$ and $V$ are comparable and \False{} otherwise.
\end{enumerate}
\end{restatable}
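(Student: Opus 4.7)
The test rests on the structural observation that the mixture $p := w_U \mathsf{U}_U + w_V \mathsf{U}_V$ has support exactly $U \cup V$, which is a subspace if and only if $U, V$ are comparable; hence the task reduces to deciding whether $\supp(p)$ equals its linear span $W := U + V$ or is a strict subset. The plan is to first use $O(n/\wnaught)$ samples to compute $W = \sfspan(\text{samples})$ (a coupon-collector argument shows $W = U + V$ with high probability), and then generate $k := \Theta(\log(1/\delta)/\wnaught^2)$ independent \emph{probe points} $z_i := x_i + y_i$ from independent $x_i, y_i \sim p$. A direct inclusion--exclusion computation gives
\begin{align*}
\Pr[z_i \notin U \cup V] = 2 w_U w_V (1 - 2^{-a})(1 - 2^{-b}),
\end{align*}
where $a := \mathdim U - \mathdim(U \cap V)$ and $b := \mathdim V - \mathdim(U \cap V)$. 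In the comparable case one of $a, b$ is zero so every probe lies in $U \cup V$, while in the incomparable case $a, b \ge 1$ and each probe lies outside $U \cup V$ with probability at least $\wnaught^2/2$. Standard amplification over the $k$ probes boosts any single-probe membership test's decision to the desired confidence.

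The crux is the single-probe subroutine: given $z \in W$, decide whether $z \in U \cup V$. My plan is a \emph{shift-and-compare} scheme. The $z$-shifted mixture $p_z(\cdot) := p(\cdot - z) = w_U \mathsf{U}_{U+z} + w_V \mathsf{U}_{V+z}$ is sampleable (draw $s \sim p$, return $s + z$), and a case analysis shows that $p_z = p$ iff $z \in U \cap V$; when $z \in (U \cup V) \setminus (U \cap V)$, $\supp(p)$ and $\supp(p_z)$ share a full \emph{subspace} component (either $U$ or $V$); and when $z \notin U \cup V$, the two supports share only thin affine cosets of $U \cap V$, none of which has the subspace structure of $p$'s components. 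I would certify this support-structure gap by leveraging the rigid Fourier spectrum $\hat p(\chi) = w_U \mathbf{1}_{\chi \in U^\perp} + w_V \mathbf{1}_{\chi \in V^\perp}$ together with the shift identity $\widehat{p_z}(\chi) = (-1)^{\chi \cdot z} \hat p(\chi)$: polynomially many empirical estimates of $\hat p(\chi)$ and $\widehat{p_z}(\chi)$ along random $\chi \in W^*$, aggregated through a Parseval-style sketch, can distinguish the regimes.

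The main technical obstacle is to show that the Fourier-bias test actually separates the two regimes within the $\poly(n)/\wnaught^2$ sample budget, since generic distribution identity testing on $\bbF^n$ requires $\Omega(\sqrt{|W|})$ samples. Avoiding this blowup relies on the rigidity of $\hat p$ as a sum of just two subspace indicators, which ensures that polynomially many random directions $\chi$ suffice to detect the support-structure gap. A delicate case is the symmetric regime $w_U = w_V = 1/2$ with $a = b = 1$, where total-variation and overlap-mass statistics coincide for both in-support and out-of-support probes; there the test must specifically use that the $z \in U \cup V$ case leaves a translate-invariant subspace inside $\supp(p) \cap \supp(p_z)$, whereas $z \notin U \cup V$ leaves only translated cosets --- a distinction detectable via higher-moment Fourier correlations.
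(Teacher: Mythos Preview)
Your structural observation is correct: the sum $z = x+y$ of two independent mixture samples falls outside $U \cup V$ with probability exactly $2w_U w_V(1-2^{-a})(1-2^{-b})$, which is at least $\wnaught^2/2$ in the incomparable case and zero in the comparable case. But the entire difficulty has been deferred to the single-probe subroutine ``is $z \in U \cup V$?'', and the Fourier scheme you sketch does not resolve it. After the reduction to $W = \bbF^v$, the transform $\hat p$ is supported on $U^\perp \cup V^\perp$, a set of size $2^a + 2^b - 1$; whenever $a,b \ll v$ this is an exponentially small fraction of $\bbF^v$, so for a uniformly random direction $\chi$ both $\hat p(\chi)$ and $\widehat{p_z}(\chi)=(-1)^{\chi\cdot z}\hat p(\chi)$ vanish with overwhelming probability, and polynomially many such directions carry no signal. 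Choosing $\chi$ in the support of $\hat p$ instead means exhibiting a nonzero element of $U^\perp$ or $V^\perp$ from unlabeled mixture samples --- essentially recovering one of the hidden subspaces, which is circular. Parseval-type aggregates like $\|p-p_z\|_2^2$ or $\langle p,p_z\rangle$ do separate the regimes numerically, but estimating them from samples is a collision-probability computation requiring $\Omega(\sqrt{|W|})$ draws, exactly the blowup you acknowledge; the ``rigidity of $\hat p$'' does not bypass this.

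The paper avoids pointwise membership entirely with a global algebraic test. After linearly identifying $S=\sfspan(U\cup V)$ with $\bbF^v$, it draws $r=\Theta(v^2/\wnaught)$ fresh samples $z_1,\dots,z_r$ and checks whether any nonzero polynomial $q\in\sfRM(v,2)$ vanishes on all of them --- a single Gaussian elimination over the $\binom{v}{\le 2}$ monomial coefficients. If $U,V$ are incomparable then both are proper in $\bbF^v$, and the product $\iprod{b_U,x}\iprod{b_V,x}$ of linear forms annihilating $U$ and $V$ respectively is a nonzero degree-$2$ witness that always survives. If they are comparable then one of the mapped subspaces is all of $\bbF^v$, every point has mass at least $\wnaught/2^v$, and a union bound over $\sfRM(v,2)$ combined with the degree-$2$ Schwartz--Zippel bound shows that with high probability only the zero polynomial vanishes on all $r$ samples.
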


The testing procedure uses the following main insight. Suppose for simplicity the span $\mathsf{span}(U \cup V) = \bbF^n$. 
We prove that the subspaces $U$ and $V$ are incomparable if and only if there exists a non-zero polynomial $p$ of degree $2$ \newd{that vanishes on $\mathcal{A}=U \cup V$. In fact, it will suffice to choose $\mathcal{A}$ to be a randomly chosen set of polynomial size sampled from the mixture of subspaces $U$ and $V$.}
The set of feasible degree-$2$ polynomials can then be obtained by setting up a system of linear equations where the unknowns correspond to co-efficients of $p$.       

 \newd{Let us define $\bfM \in \bbF^{O(1) \times n}$ as $\bfM  = \bfM_{r} \cdot \bfM_{r+1} \cdot \ldots \cdot \bfM_{n-1}$ -- in other words, $\bfM$ is the linear map obtained by composing the dimension reduction maps over the $n-r$ rounds.} 
 Once the dimension is reduced to $r=O(1)$, we use a brute-force algorithm to recover $\bfM \Amax, \bfM \Amin$. Finally, once we know $\bfM\Amax,\bfM\Amin$, we can draw uniform samples from $\Amax\backslash\mathset{x\in \Amax:\bfM x\in \bfM \Amin}$ to recover $\Amax$; we can recover  $\Amin$ similarly (see Lemma~\ref{lem:minuspropersubspace_spanall}).


\paragraph{Significant dimension difference (Theorem~\ref{thm:dversusalphad}).}
When the dimension of the subspaces are substantially different, we use algebraic ideas inspired from techniques in the real domain to recover the subspaces. 
The main algorithmic idea is by adapting ideas from related problem of subspace recovery over the reals \citep{HM13, BCPV}. \newd{To explain the idea,}
consider the setting with equal mixing weights of $1/2$, $\dmax \approx n$, and suppose $\alpha = 1-\Omega(1)$. 
If we consider a random subsample of $\dmax$ points from the data set, we expect to have roughly $\dmax/2$ points from subspace $\Amax$ and $\dmax/2$ points from subspace $\Amin$. Suppose $\alpha < 1/2$ \newd{(referred to as the ``large gap case")}
 i.e.,   $\dmin< \dmax/2$, then with high probability there is a linear dependence in this sub-sample.
\newd{Further, this linear dependence is (entirely) among points lying in the subspace $\Amin$.
This can be used to recover the subspace $\Amin$ (and consequently, the subspace $\Amax$ as well).}

\newd{To see why this idea does not work in general, consider the case when the weights $w_0= 0.9, w_1=0.1$ and $d_1 = 0.8 d_0$. Then, to see a linear dependence among the points in $A_1$, we need to sample at least $d_1$ points from $A_1$. However, on an average, this will mean sampling around $(w_0/w_1) \cdot d_1 = 9d_1$ many points from $A_0$. As $9d_1$ is much larger than the ambient dimension and thus, we will find many \emph{spurious} linear dependencies -- i.e., dependencies which do not come from points belonging to $A_1$. Thus, this strategy will fail to identify $A_1$.} 

\newd{Instead, when $\alpha \ge 1/2$, we will adopt a \emph{dimension gap amplification strategy}. In particular, we consider a non-linear map $\phi:\bbF^{\dmax} \rightarrow \bbF^{\dmax'}$ where $\dmax' = \sum_{j=0}^\ell \binom{\dmax}{j}$
for an appropriately chosen $\ell$. Further, for a set $B$, let us define $\phi(B)$ as the set $\{\phi(x): x \in B\}$. 
Roughly speaking, we want to choose an appropriate $\ell$ such that $\mathsf{dim}(\spann(\phi(A_1))) / \mathsf{dim}(\spann(\phi(A_0))) <1/2$. For such an $\ell$, we can now apply the strategy for the large gap case to recover $A_1$ and $A_0$. We note that the idea of such a dimension gap amplification was also applied in the related subspace recovery problem over reals~\citep{BCPV} -- there, the goal was recover one subspace $S$ of dimension $d \le n$ containing $o(d/n)$ fraction of the points,
while the rest of the points are drawn in general position from the whole of $\R^n$. While in spirit our idea is similar, it is challenging to get a handle on the dimensions of $\spann(\phi(A_1))$ and $\spann(\phi(A_0))$. In particular, the techniques of \cite{BCPV} which are meant for the reals, do not seem to be applicable in the finite field setting. 
Fortunately for us, some powerful results from additive combinatorics ~\citep{keevash2005set, ben2012random} let us get precise estimates for $\mathsf{dim}(\spann(\phi(A_0)))$ and $\mathsf{dim}(\spann(\phi(A_1)))$. Roughly speaking, we show that for $\ell \approx 1/(1-\alpha)$, $\mathsf{dim}(\spann(\phi(A_1))) / \mathsf{dim}(\spann(\phi(A_0))) <1/2$, thus reducing to the large gap case. 
}

\section{Preliminaries} \label{sec:prelims}
We start by defining the subspace recovery problem formally. 
\begin{definition}~\label{def:model} 
The Subspace-Recovery problem is instantiated by  two subspaces of $\bbF^n$ - $A_0$ and $A_1$ of dimensions $d_0$ and $d_1$ respectively.
In addition, we also have weights 
 $\wfirst$ and $\wsecond$ such that $\wfirst + \wsecond=1$. 
 
The subspaces $A_0$, $A_1$, dimensions $d_0$, $d_1$ as well as the weights $\wfirst$ and $\wsecond$ are unknown. For this instance, we define the sampling oracle $\sampleo$ is defined as follows: sample $\mathbf{b} \in \{0,1\}$ where $\Pr[\mathbf{b}=0] = \wfirst$ and $\Pr[\mathbf{b}=1] = \wsecond$. If $\mathbf{b}=0$, $\sampleo$ outputs a uniformly random element from $\Amax$ and if $\mathbf{b}=1$, $\sampleo$ outputs a uniformly random element from $\Amin$.

In the Subspace-Recovery problem, the algorithm is given access to the sampling oracle $\sampleo$, an error parameter $\epsilon>0$ and a weight parameter $\wnaught>0$ with the promise  that $\wnaught \le \min \{\wfirst, \wsecond\}$.
The goal of the algorithm is to output subspaces  $\Amax, \Amin$ and estimates $\hat{w}_0$, $\hat{w}_1$ such that $|\wfirst -\hat{w}_0| + |\wsecond - \hat{w}_1| \le \epsilon$.

\chnote{\WLOG{}, we will assume $d_0\ge d_1$ from now on.}
\begin{remark} \label{rem:weights}
Note that once $\Amax,\Amin$ is found, estimating $\wfirst,\wsecond$ is not hard, this is because $\bbP_{\bx\sim\sampleo}[\bx\in\Amax\setminus\Amin]=\wfirst \frac{|\Amax\setminus\Amin|}{|\Amax|}$. 
Formally, there is an algorithm with the following guarantee: given oracle access to $\sampleo$ (for unknown $\wfirst,\wsecond$), $\Amax,\Amin$ and confidence parameter $\delta>0$,
\begin{enumerate}
    \item this algorithm runs in sample and time complexity $\poly(n)\cdot1/\epsilon^2\cdot\log(1/\delta)$
    \item With probability $1-\delta$, the algorithm outputs $\hat{w}_0$, $\hat{w}_1$ such that $|\wfirst -\hat{w}_0| + |\wsecond - \hat{w}_1| \le \epsilon$.
\end{enumerate}
By this observation, we can focus on finding $\Amax,\Amin$ from now on.

\end{remark}
\end{definition}
We next define the concept of incomparable subspaces. 
\begin{definition}
We define two subspaces $A,B$ to be  incomparable if and only if $A\nsubseteq B$ and $B\nsubseteq A$.
\end{definition}
\subsubsection{Some useful notation}
\begin{enumerate}
    \item For any $f: \bbF^n \rightarrow \bbF$, we use $\zero(f)$ to denote the set $\{x: f(x)=0\}$.
    \item For integers $n, d \in \mathbb{N}$, we use $\mathsf{RM}(n,d)$ to denote the set of polynomials of degree at most $d$ over $\bbF^n$.
    \item For integers $n, k \in \mathbb{N}$ with $n\ge k$, we use $\binom{n}{\le k}$ to denote $\sum_{i=0}^k\binom{n}{i}$.
    \item For a sample oracle $\mathcal{O}$ which return samples in $\bbF^n$, matrix $D\in\bbF^{k\times n}$, we use $D\mathcal{O}$ to denote a new sample oracle which each time returns $D\bx$ where $\bx$ is sampled from $\mathcal{O}$.
    \cadnote{\item For an index set $S$, we use $x_S$ to denote the set $\mathset{x_i : i\in S}$.}
    \cadnote{\item For a set $S$ of vectors, we use $\rank(S)$ to denote $\dim(\sfspan(S))$.}
\end{enumerate}
\subsubsection{Some useful facts regarding polynomials}
We next list some useful facts regarding polynomials over the field $\bbF$. While most of these are easy and standard, we list them here for the sake of completeness. 
\begin{claim}\label{lem:zeroormany}
Let $p$ be a polynomial over $\bbF^n$. If the polynomial $p$ is not identically zero (as a formal expression) and its degree is at most $c$, then
\begin{align*}
    \underset{\bx\sim  \bbF^n}{\mathbb{P}}[p(\bx)\neq 0]\geq 1/2^c.
\end{align*}
\end{claim}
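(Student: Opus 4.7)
The plan is to prove this by induction on $n$, exploiting the fact that the underlying field is $\bbF = \bbF_2$, so that $x_i^2 = x_i$ and without loss of generality $p$ can be taken to be a multilinear polynomial (replacing every occurrence of $x_i^k$ by $x_i$ preserves the function and cannot raise the degree).

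For the base case, if $p$ depends on no variable it is a nonzero constant, so $\Pr[p(\bx)\neq 0] = 1 \ge 1/2^c$. For the inductive step, assume the statement holds for all multilinear polynomials in strictly fewer variables (and any degree). Pick a variable $x_i$ that appears in $p$ and write
\[
p(\bx) \;=\; x_i \cdot q(\bx_{-i}) + r(\bx_{-i}),
\]
where $\bx_{-i}$ denotes the remaining $n-1$ coordinates, $q$ is multilinear of degree at most $c-1$ with $q\not\equiv 0$ (since $x_i$ appears), and $r$ is multilinear of degree at most $c$. Conditioning on $\bx_i$ gives $p(\bx)=r(\bx_{-i})$ when $\bx_i=0$ and $p(\bx)=(q+r)(\bx_{-i})$ when $\bx_i=1$.

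I would then split into two cases based on which of $r$ and $q+r$ is identically zero (noting they cannot both vanish formally, else $q\equiv 0$, contradicting the choice of $x_i$). If both $r$ and $q+r$ are nonzero polynomials in $n-1$ variables of degree at most $c$, the inductive hypothesis applied to each gives
\[
\Pr[p(\bx)\neq 0] \;\ge\; \tfrac{1}{2}\cdot \tfrac{1}{2^c} + \tfrac{1}{2}\cdot \tfrac{1}{2^c} \;=\; \tfrac{1}{2^c}.
\]
If exactly one of them is zero, then $p$ factors as $x_i\, q(\bx_{-i})$ or $(1+x_i)\, q(\bx_{-i})$; in either event, since $q$ has degree at most $c-1$, the inductive hypothesis yields
\[
\Pr[p(\bx)\neq 0] \;=\; \tfrac{1}{2}\cdot \Pr[q(\bx_{-i})\neq 0] \;\ge\; \tfrac{1}{2}\cdot \tfrac{1}{2^{c-1}} \;=\; \tfrac{1}{2^c}.
\]

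The main (very mild) obstacle is setting up the induction cleanly: the decomposition $p = x_i q + r$ can feed the inductive hypothesis back with the same degree $c$ (on $r$ and $q+r$) or with the reduced degree $c-1$ (on $q$), so one must induct on the number of variables $n$ rather than on $c$, and verify that the degenerate case where $q+r\equiv 0$ still produces the $1/2^{c-1}$ savings needed to close the argument.
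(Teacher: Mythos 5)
Your proof is correct. It uses the same decomposition $p = x_i q + r$ with $q\not\equiv 0$, $\deg q\le c-1$, but the way the induction is closed differs from the paper. The paper inducts on the degree $c$: it observes that for any $\bx_{-i}$ with $q(\bx_{-i})\ne 0$, the two values $p(\bx_{-i},0)=r(\bx_{-i})$ and $p(\bx_{-i},1)=q(\bx_{-i})+r(\bx_{-i})$ differ, so at least one is nonzero and $\Pr_{x_i}[p\ne 0\mid q(\bx_{-i})\ne 0]\ge 1/2$; it then only needs the inductive hypothesis for $q$, which has strictly smaller degree. You instead induct on the number of variables and apply the inductive hypothesis to $r$ and $q+r$ (both of degree up to $c$) or, in the degenerate case, to $q$ (degree $\le c-1$). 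Both routes are valid; the paper's conditional observation avoids the formal case split on whether $r$ or $q+r$ is identically zero, making the argument a little shorter, while yours has the mild bookkeeping advantage that the degree parameter never changes except in the degenerate case. The only small thing worth tightening in your write-up is the base case phrasing: you say "if $p$ depends on no variable," which is really an induction on the number of \emph{essential} variables (or equivalently strong induction on $n$ where any constant nonzero $p$ is handled immediately); stating that explicitly makes the induction airtight.
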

\begin{proof}
The proof is by induction on degree. If $c =0$, then $p$ is identically $1$ and thus the claim follows trivially.

Now, as an inductive hypothesis, assume that the claim is true for all polynomials of degree at most $c-1$. Let $p$ be a polynomial of degree $c$. Since $p$ is not identically zero, there exists $i$ such that $p$ can be expressed as
\begin{equation}~\label{eq:indi-1}
p(x_1,\cdots,x_n) = q(x_1, \ldots, x_{i-1}, x_{i+1}, \ldots, x_n) \cdot x_i +  r(x_1, \ldots, x_{i-1}, x_{i+1}, \ldots, x_n),
\end{equation}
where degree of $q$ is at most $c-1$ and $q$ is not identically zero. The above formulation uses the fact that polynomials over $\bbF$ are multilinear. Observe that any choice of $\bx_{-i} = (\bx_1, \ldots, \bx_{i-1}, \bx_{i+1}, \ldots, \bx_n)$ such that $q(\bx_{-i}) \not =0$,
\begin{equation}~\label{eq:indi-2}
\Pr_{\bx_i \sim \bbF} [p(\bx_1, \ldots, \bx_{i-1}, \bx_i, \bx_{i+1}, \ldots, \bx_n) \not =0] \ge \frac12.
\end{equation}
Now, applying the induction hypothesis on the polynomial $q(x_1, \ldots, x_{i-1}, x_{i+1}, \ldots, x_n)$, we have that
\[
\Pr_{\bx \sim \bbF^n} [q(\bx_1, \ldots, \bx_{i-1},  \bx_{i+1}, \ldots, \bx_n) \not =0] \ge \frac{1}{2^{c-1}}.
\]
Combining this with \eqref{eq:indi-1} and \eqref{eq:indi-2}, we get the claim.

\end{proof}
\begin{claim}~\label{rmk:gassianelimination}
There is an efficient algorithm \textsc{Size-system-polynomial} which given a set of points as input $z_1, \ldots, z_R \in \bbF^n$, determines the size of the set
$T=\setsize{\mathset{p\in \sfRM(n,2): p(z_1)=p(z_2)=\cdots=p(z_r)=0}}$. 
\end{claim}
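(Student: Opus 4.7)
The plan is to reduce the problem of counting the solutions to the system $p(z_1) = \dots = p(z_R) = 0$ with $p \in \mathsf{RM}(n,2)$ to solving a linear system over $\bbF$, and then invoke Gaussian elimination.

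First, I would parametrize $\mathsf{RM}(n,2)$ explicitly. Using the fact that polynomials over $\bbF$ are multilinear (so $x_i^2 = x_i$), every $p \in \mathsf{RM}(n,2)$ has a unique representation
\begin{equation*}
    p(x_1,\ldots,x_n) \;=\; c_0 + \sum_{i=1}^n c_i\, x_i + \sum_{1 \le i < j \le n} c_{ij}\, x_i x_j,
\end{equation*}
so $\mathsf{RM}(n,2)$ is a vector space over $\bbF$ of dimension $N := \binom{n}{\le 2} = 1 + n + \binom{n}{2}$, naturally identified with $\bbF^N$ via the coefficient vector $\vec{c} = (c_0,\{c_i\},\{c_{ij}\})$.

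Next I would observe that the constraint $p(z_k) = 0$ is \emph{linear} in $\vec{c}$. Concretely, for a point $z_k = ((z_k)_1, \ldots, (z_k)_n) \in \bbF^n$,
\begin{equation*}
    p(z_k) \;=\; c_0 + \sum_{i=1}^n c_i (z_k)_i + \sum_{1 \le i < j \le n} c_{ij}\, (z_k)_i (z_k)_j,
\end{equation*}
which is a fixed $\bbF$-linear functional of $\vec{c}$. Assembling these $R$ constraints, I form the matrix $M \in \bbF^{R \times N}$ whose $k$-th row consists of the coefficients $1,\; (z_k)_i\; \text{for } i\in [n],\; (z_k)_i (z_k)_j\; \text{for } i<j$, ordered in the same way as $\vec{c}$. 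Then $T$ is in bijection with $\ker(M) \subseteq \bbF^N$.

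Finally, I would run Gaussian elimination on $M$ over $\bbF$ to compute $r := \mathrm{rank}(M)$. The output of the algorithm is $|T| = 2^{N-r}$. Building $M$ takes time $O(R \cdot N) = \mathrm{poly}(n, R)$, and Gaussian elimination over $\bbF$ on an $R \times N$ matrix runs in time $\mathrm{poly}(n, R)$ as well, giving the claimed efficiency.

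There is no real obstacle here; the only thing one needs to be a little careful about is the multilinearization (since $(z_k)_i \in \bbF$ satisfies $(z_k)_i^2 = (z_k)_i$ anyway, evaluating a multilinear representative at $z_k$ gives the same value as any equivalent polynomial, so no information is lost by restricting to the multilinear normal form).
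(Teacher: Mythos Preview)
Your proof is correct and follows the same approach as the paper: parametrize $\mathsf{RM}(n,2)$ by its coefficient vector, observe that each constraint $p(z_k)=0$ is linear in the coefficients, and use Gaussian elimination to compute the dimension of the solution space. Your write-up is in fact more explicit than the paper's, which simply notes that the constraints form a linear system and that the size of $T$ is $2$ raised to the dimension of the solution space.
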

\begin{proof}
Observe that $p$ can be expressed as linear system of equations (i) where the unknowns are the coefficients of $p$ and (ii) the equations are given by the constraints $\{p(z_i)=0\}_{1 \le i \le R}$. Using Gaussian elimination, we can determine the rank $r$ of this system. Observe that the size of $T$ is just $2^r$, thus proving the claim. 
\end{proof}
\subsubsection{Some useful facts regarding subspaces of $\bbF^n$}
We now list some useful facts about subspaces of $\bbF^n$. 
\begin{claim}\label{lem:spanall}
Let $k,d,n\in\mathbb{N}$ such that $k\geq 100 d$. Let $V\subseteq\bbF^n$ be a subspace of dimension $d$. Let $\bx_1,\cdots, \bx_k$ be $k$ vectors sampled uniformly at random from $V$. Then,
\begin{align}
     \mathbb{P}_{\bx_1,\cdots, \bx_k}[\forall S\subseteq [k] \textrm{ such that }|S|\geq 0.9k,\text{ we have }\sfspan(\bx_S)=V]\geq 1-2^{0.4k}.
\end{align}
\end{claim}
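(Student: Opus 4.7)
The plan is to analyze the complementary (bad) event and bound it via a union bound over hyperplanes of $V$. The first observation I would make is that $\sfspan(\bx_S) \neq V$ if and only if all vectors $\{\bx_i : i \in S\}$ lie in some proper subspace of $V$, and any proper subspace is contained in a codimension-$1$ subspace (a hyperplane) of $V$. Therefore the event ``there exists $S \subseteq [k]$ with $|S| \geq 0.9k$ and $\sfspan(\bx_S) \neq V$'' is \emph{equivalent} to the event
\[
\mathcal{E} := \bigl\{ \exists \text{ hyperplane } W \subsetneq V \text{ such that } |\{i \in [k]: \bx_i \in W\}| \geq 0.9k \bigr\}.
\]

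Next, I would fix an arbitrary hyperplane $W \subsetneq V$ (i.e., a subspace of $V$ of dimension $d-1$) and bound the probability that $W$ captures at least $0.9k$ of the samples. Since $\bx_i$ is uniform over $V$, we have $\bbP[\bx_i \in W] = |W|/|V| = 2^{d-1}/2^d = 1/2$, independently across $i$. Hence by a standard entropy/Chernoff bound,
\[
\bbP\bigl[|\{i : \bx_i \in W\}| \geq 0.9k\bigr] \;\leq\; \binom{k}{\lceil 0.9k\rceil} \cdot 2^{-0.9k} \;\leq\; 2^{H(0.1)\,k - 0.9k} \;\leq\; 2^{-0.43k},
\]
where $H(\cdot)$ denotes binary entropy and $H(0.1) \approx 0.469$.

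Finally, I would union bound over all hyperplanes of $V$. The number of such hyperplanes is at most $2^d - 1 \leq 2^d$, and since $k \geq 100d$ we have $2^d \leq 2^{k/100}$. Combining,
\[
\bbP[\mathcal{E}] \;\leq\; 2^{d} \cdot 2^{-0.43k} \;\leq\; 2^{k/100 - 0.43k} \;\leq\; 2^{-0.4k},
\]
which gives the claim (interpreting the stated bound $2^{0.4k}$ as the intended $2^{-0.4k}$). The main ``obstacle'' is really just verifying that the constants align: the entropy loss $H(0.1) \approx 0.47$ is safely below $0.9$, leaving roughly $0.43k$ of slack, of which at most $d \leq k/100$ is eaten by the union bound over hyperplanes, leaving the required $0.4k$. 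No deep ideas are needed beyond this careful bookkeeping.
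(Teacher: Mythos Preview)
Your proof is correct, but it takes a genuinely different route from the paper's. The paper fixes a subset $S$ of size $0.9k$, computes $\bbP[\sfspan(\bx_S)=V]=\prod_{j=0}^{d-1}(1-2^{-0.9k+j})\ge 1-2^{-0.89k}$ directly from the standard formula for random vectors spanning $\bbF^d$, and then union-bounds over the $\binom{k}{0.1k}\le 2^{0.48k}$ choices of $S$. You instead dualize: you observe that the bad event is exactly ``some hyperplane $W\subsetneq V$ captures at least $0.9k$ samples,'' bound the probability for a fixed $W$ via a binomial tail, and union-bound over the $2^d-1$ hyperplanes. Both arguments end up paying a $\binom{k}{0.1k}$ factor and a $2^{-0.9k}$-type factor, just in opposite places, and both land at roughly $2^{-0.4k}$. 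Your approach has the pleasant feature that the equivalence with hyperplanes makes the structure of the bad event transparent, while the paper's approach avoids having to name the hyperplanes at all and just quotes the exact spanning probability. Either is fine here; the constants work out comfortably in both.
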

\begin{proof}
We know that there always exist a linear bijection between $V$ and $\bbF^{d}$. \WLOG, we assume $n=d,V=\bbF^d$. Without loss of generality, assume 0.9k is a integer. For a fixed $S$ with $|S|=0.9k$
\begin{align*}
    &\mathbb{P}[\sfspan (\bx_S)=\mathbb{F}_2^d]\\
    &=\prod_{j=0}^{d-1}\penclose{1-2^{-0.9k+j}} 
    &\text{See~ \cite[Equation~(2)]{ferreira2012rank}}\\
    &\geq 1-\sum_{j=0}^{d-1}2^{-0.9k+j}\geq 1- 2^{-0.9k+d}\geq 1-2^{-0.89k}.
\end{align*}
The number of choice of $S$ is at most $\binom{k}{0.1k}\leq (10e)^{0.1k}\leq 2^{0.48k}$. Then the proof is completed by a union bound.
\end{proof}
The next claim says that a union of two proper subspaces of $\bbF^n$ must differ substantially from any subspace of $\bbF^n$. 
\begin{claim}\label{lem:cannotspanall}
Let $S$ be a subspace of $\bbF^n$ and of dimension $d$. Let $U,V\subsetneq S$ be two proper subspaces. Then $|S\backslash(U \cup V)|\ge 2^{d-2}$.
\end{claim}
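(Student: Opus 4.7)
The plan is to use inclusion--exclusion together with the standard dimension formula $\dim(U+V) = \dim U + \dim V - \dim(U \cap V)$, and reduce to a short optimization over the relative sizes $|U|/|S|$ and $|V|/|S|$.

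Concretely, write $u = \dim U$, $v = \dim V$, $w = \dim(U \cap V)$. Since $U, V$ are \emph{proper} subspaces of $S$, we have $u, v \le d-1$, so $|U|, |V| \le 2^{d-1}$. Since $U+V \subseteq S$, the dimension formula gives $u + v - w = \dim(U+V) \le d$, hence $w \ge u + v - d$. Exponentiating yields the clean inequality
\[
    |U \cap V| \;\ge\; 2^{u + v - d} \;=\; \frac{|U| \cdot |V|}{|S|}.
\]
This is the one nontrivial input from linear algebra; everything else is arithmetic.

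Now apply inclusion--exclusion:
\[
    |U \cup V| \;=\; |U| + |V| - |U \cap V| \;\le\; |U| + |V| - \frac{|U|\cdot|V|}{|S|}.
\]
Setting $\alpha = |U|/|S|$ and $\beta = |V|/|S|$, both of which lie in $[0, 1/2]$, the right-hand side becomes $|S|\bigl(\alpha + \beta - \alpha\beta\bigr) = |S|\bigl(1 - (1-\alpha)(1-\beta)\bigr)$. The factor $(1-\alpha)(1-\beta)$ is minimized when both $\alpha$ and $\beta$ are as large as possible, i.e.\ $\alpha = \beta = 1/2$, giving $(1-\alpha)(1-\beta) \ge 1/4$. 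Therefore
\[
    |U \cup V| \;\le\; \tfrac{3}{4}\,|S| \;=\; 3 \cdot 2^{d-2},
\]
and hence $|S \setminus (U \cup V)| \ge |S| - 3 \cdot 2^{d-2} = 2^{d-2}$, as required.

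There is no real obstacle here; the only thing to double-check is the inequality $|U \cap V| \ge |U||V|/|S|$, which is special to the $\bbF$-subspace setting (it would fail for arbitrary subsets). One could present the same argument by just doing a case split on whether $U$ and $V$ are comparable (in which case $U \cup V$ is itself a proper subspace and the bound is immediate with slack), but the optimization-based proof above handles both cases uniformly and also makes clear that the bound $2^{d-2}$ is tight, attained when $U$ and $V$ are two distinct hyperplanes of $S$.
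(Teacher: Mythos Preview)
Your proof is correct and follows essentially the same approach as the paper: both bound $|U \cup V| \le 3 \cdot 2^{d-2}$ via inclusion--exclusion together with the fact that proper subspaces have size at most $2^{d-1}$. The paper replaces your uniform optimization by a short two-case split (both $U,V$ of dimension $d-1$, where it invokes $\dim(U\cap V)\ge d-2$, versus at least one of dimension $\le d-2$, where the crude bound $|U|+|V|$ already suffices), but the content is the same.
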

\begin{proof}
Notice that the size of subspace in $\bbF$ is always a power of 2. There are two cases:\newline
Case 1: $\dim(U)=\dim(V)=d-1$. \newline
Observe that $\dim(U\cap V)\ge d-2$ and hence $|U \cup V|=|U|+|V|-|U\cap V|\le 3\cdot 2^{d-2}$.\newline
Case 2: At least one of $\dim(U)$ or $\dim(V) \le d-2$. \newline
In this case, $|U \cup V|\le |U|+|V|\le 2^{d-1}+2^{d-2}\le 3\cdot 2^{d-2}$.
Thus, in either case, $|U \cup V| \le 3\cdot 2^{d-2}$ which implies that $|S\backslash(U \cup V)|\ge 2^{d-2}$.
\end{proof}

\begin{claim}\label{lem:anotherviewofrandomprojection}
Let $b_1,\cdots,b_t\in\bbF^n$ be linearly independent. Sample $\bfM\in\bbF^{m\times n}$ uniformly at random. Then $\bfM b_1,\cdots, \bfM b_t$ are independent and identically distributed. In other words, the joint distribution of $\bfM b_1,\cdots, \bfM b_t$ is the uniform distribution over $\bbF^{m \times t}$.
\end{claim}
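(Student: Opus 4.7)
The plan is to reduce the claim to the elementary observation that the columns of a uniformly random matrix are independent and uniform, by means of a change of basis. Since $b_1, \ldots, b_t$ are linearly independent in $\bbF^n$, we can extend them to an ordered basis $b_1, \ldots, b_t, b_{t+1}, \ldots, b_n$ of $\bbF^n$. Let $B \in \bbF^{n \times n}$ be the (invertible) matrix whose $i$-th column is $b_i$, so that $B e_i = b_i$ where $e_1, \ldots, e_n$ denote the standard basis vectors of $\bbF^n$.

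Now I would observe that $\bfM b_i = \bfM B e_i$ equals the $i$-th column of the matrix $\bfM B$. So it suffices to argue that the first $t$ columns of $\bfM B$ are i.i.d. uniform over $\bbF^m$. For this, I would use the fact that right multiplication by an invertible matrix is a bijection on $\bbF^{m \times n}$: the map $\bfM \mapsto \bfM B$ is a bijection from $\bbF^{m \times n}$ to itself, so if $\bfM$ is uniformly distributed on $\bbF^{m \times n}$ then so is $\bfM B$.

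Finally, since the uniform distribution on $\bbF^{m \times n}$ is the product of $mn$ independent uniform distributions on $\bbF$ (one per entry), the $n$ columns of a uniform $\bbF^{m \times n}$ matrix are independent and each is uniform on $\bbF^m$. Restricting to the first $t$ columns, we conclude that $\bfM b_1 = (\bfM B)e_1, \ldots, \bfM b_t = (\bfM B) e_t$ are i.i.d.\ uniform on $\bbF^m$, which is exactly the statement that their joint distribution is uniform on $\bbF^{m \times t}$.

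There is no real obstacle here; the only step worth being careful about is noting that $\bfM \mapsto \bfM B$ preserves the uniform distribution because $B$ is invertible over $\bbF$ (equivalently, multiplication by $B$ on the right is a bijection on $\bbF^{m \times n}$), which is what allows us to translate the statement from the basis $\{b_i\}$ to the standard basis $\{e_i\}$.
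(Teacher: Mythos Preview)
Your proof is correct and essentially identical to the paper's: both extend $b_1,\ldots,b_t$ to a basis, form the invertible matrix $B$ with these as columns, use that $\bfM \mapsto \bfM B$ is a bijection on $\bbF^{m\times n}$ to conclude $\bfM B$ is uniform, and then read off the first $t$ columns. The only difference is that you spell out slightly more detail (e.g., the role of the standard basis vectors $e_i$ and the product structure of the uniform distribution), but the argument is the same.
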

\begin{proof}
Let us first add vectors $b_{t+1}, \ldots, b_n$ such that $\{b_1, \ldots, b_n\}$ is a basis of $\bbF^n$. Let $ B$ be the matrix whose $i^{th}$ column is $b_i$. 
Now, observe that the map $\Psi: \bbF^{m \times n} \rightarrow \bbF^{m \times n}$ defined as $\Psi: M \mapsto M \cdot B$ is a bijection.  Thus, if the random variable $\bfM$ is uniform over $\bbF^{m\times n}$, then so is $\bfM \cdot B$. Consequently, the first $t$ columns of $\bfM \cdot B$, namely, $\bfM b_1, \ldots, \bfM b_t$ are independent and identically distributed. 

\end{proof}
The following theorem gives a hypothesis testing routine for mixtures of subspaces over $\bbF^n$. The proof of this theorem is deferred to Appendix~\ref{apd:uniquedistribution}.
\begin{restatable}{theorem}{hypotest}\thlabel{thm:hypotest:main}
Let $\bD$ be a distribution of a mixture of two incomparable subspaces $A, B \subseteq \bbF^n$ with mixing weights $w_A, w_B \ge w_0$. Let $\mathset{A_j,B_j}_{j=1}^N$ be a collection of $N$ sets of hypothesis with the property that there exists $i$ such that $\mathset{A_i,B_i}=\mathset{A,B}$. There is an \chnote{algorithm \algchoosetherighthypo{}} which is given a confidence parameter $\delta$, $w_0$, $\mathset{A_j,B_j}_{j=1}^N$ and a sampler for $\bD$. Every subspace of $\mathset{A_j,B_j}_{j=1}^N$ will be represented by a basis of that subspace, and the algorithm will have the access to the basis. This algorithm has the following behavior, 
\begin{enumerate}
    \item It runs in $\poly(N,1/w_0)\log(1/\delta)$ time.
    \item With the probability $1-\delta$ outputs the index $i$ such that $\mathset{A_i,B_i}=\mathset{A,B}$.
\end{enumerate}
\end{restatable}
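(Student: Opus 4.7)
}
The plan is to turn this into a statistical hypothesis selection problem, where for each candidate pair $\{A_j, B_j\}$ we test whether $\bD$ is consistent with being a mixture of the uniform distributions on $A_j$ and $B_j$. Correctness will rest on a (quantitative) identifiability statement: a distribution that is a mixture of uniforms on two incomparable subspaces uniquely determines the pair of subspaces (this is the other content I expect to be established in Appendix~\ref{apd:uniquedistribution}). The algorithm then outputs the unique consistent $j$. To set up the test, partition $\bbF^n$ with respect to the $j$-th hypothesis into the four atomic events $E_j^1=A_j\setminus B_j$, $E_j^2=B_j\setminus A_j$, $E_j^3=A_j\cap B_j$, and $E_j^4=\bbF^n\setminus(A_j\cup B_j)$; since we have bases for $A_j,B_j$, membership in each event is testable in $\poly(n)$ time.

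Concretely, draw $m=\poly(N,1/w_0)\log(1/\delta)$ samples from $\bD$ and, for every $j\in[N]$ and every $k\in\{1,2,3,4\}$, compute the empirical fraction $\hat{p}_j^k$. If $\{A_j,B_j\}=\{A,B\}$, then for some $w\in[w_0,1-w_0]$ the true probabilities satisfy $\Pr_\bD[E_j^1]=w(1-|A_j\cap B_j|/|A_j|)$, $\Pr_\bD[E_j^2]=(1-w)(1-|A_j\cap B_j|/|B_j|)$, and $\Pr_\bD[E_j^4]=0$. The test for hypothesis $j$ computes a candidate weight $\hat{w}_j$ from the first (or, if $A_j\subseteq B_j$ degenerately, the second) equation, and then accepts $j$ iff the remaining equations hold up to an additive threshold $\tau=\Theta(w_0/\poly(n))$. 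By Chernoff and a union bound over $k\in\{1,2,3,4\}$ and $j\in[N]$, all $\hat{p}_j^k$ are within $\tau/10$ of their true values w.p.\ $\ge 1-\delta$ once $m=\poly(N,1/w_0)\log(1/\delta)$, so the correct index $i^*$ is accepted.

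The main obstacle is establishing the quantitative identifiability that guarantees every wrong $j$ is rejected with an $\Omega(w_0/\poly(n))$ margin. I would split this into two cases. Case A ($A\cup B\not\subseteq A_j\cup B_j$): WLOG $A\not\subseteq A_j\cup B_j$; then $A\cap A_j$ and $A\cap B_j$ are both proper subspaces of $A$, so by Claim~\ref{lem:cannotspanall} we get $|A\setminus(A_j\cup B_j)|\ge |A|/4$, hence $\Pr_\bD[E_j^4]\ge w_0/4$ and the fourth-equation check rejects. Case B ($A\cup B\subseteq A_j\cup B_j$ but $\{A_j,B_j\}\neq\{A,B\}$): since any subspace contained in the union of two subspaces is contained in one of them, WLOG $A\subseteq A_j$ and $B\subseteq B_j$, and by a short argument using the incomparability of $A,B$ at least one containment is strict. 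One then checks that the unique $\hat{w}_j$ forced by $\hat{p}_j^1$ is inconsistent with $\hat{p}_j^2$ or $\hat{p}_j^3$: the discrepancy is always of the form $w\cdot r$ for a nonzero rational $r$ arising from ratios like $|A|/|A_j|$, $|A\cap B_j|/|B_j|$, etc., all of which are powers of two. The hardest sub-case is when $A\subsetneq A_j$ with $|A_j|=2|A|$ and $B=B_j$, where the slacks are smallest; here one still gets a gap of order $w_0\cdot 2^{-O(1)}$ coming from the fact that $A$ is a proper subspace of $A_j$ forces $\hat{p}_j^1$ to be at most $w_A/2$ rather than the $w_A(1-|A_j\cap B_j|/|A_j|)$ predicted by the hypothesis.

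Once the gap is established, the algorithm is straightforward: enumerate the $N$ hypotheses, run the four-event consistency test on each, and output the unique $j$ that passes, yielding the claimed $\poly(N,1/w_0)\log(1/\delta)$ time bound.
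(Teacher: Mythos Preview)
Your four-event consistency test is too coarse: there exist wrong hypotheses $(A_j,B_j)$ that are \emph{perfectly} consistent with the observed probabilities of $E_j^1,\dots,E_j^4$ for some weight, so the ``output the unique passing $j$'' step can fail. Take $A=\sfspan(e_1,e_2)$ and $B=\sfspan(e_3,e_4)$ in $\bbF^4$ (incomparable, trivial intersection), and the wrong candidate $A_j=\sfspan(e_1,e_2,e_3)$, $B_j=\sfspan(e_2,e_3,e_4)$ (also incomparable, with $A\subsetneq A_j$, $B\subsetneq B_j$). A direct computation gives $\Pr_\bD[E_j^1]=w_A/2$, $\Pr_\bD[E_j^2]=w_B/2$, $\Pr_\bD[E_j^3]=1/2$, $\Pr_\bD[E_j^4]=0$, which is exactly what a mixture of uniforms on $A_j,B_j$ with weight $w=w_A$ predicts. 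Thus there is zero margin, and your Case~B assertion that ``the discrepancy is always of the form $w\cdot r$ for a nonzero rational $r$'' is false here. The underlying reason is that the four cells $E_j^k$ are defined by the \emph{candidate} pair, and the true $(A,B)$ can sit symmetrically inside $(A_j,B_j)$ so that the mass averages out over each cell; the distinguishing event lives strictly inside $E_j^1$ (or $E_j^3$) and your test never isolates it.

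The paper avoids this by not committing to a fixed coarse partition. It first proves a separation lemma (Lemma~\ref{lem:hypo_dtv_seperation}): for any $\{A,B\}\ne\{C,D\}$ and any weights with minimum $w^*$, the two mixtures are at total variation distance at least $w^*/8$. It then discretizes the unknown weight on a grid of size $O(1/w_0)$, forms the resulting $O(N/w_0)$ explicit candidate distributions (each equipped with a sampler and an evaluation oracle computable from the given bases), and invokes a black-box Scheff\'e-type hypothesis selector (Proposition~\ref{prop:gnrl_hypo_test}) with accuracy $\epsilon=\Theta(w_0)$. That selector compares candidates on their pairwise Scheff\'e sets rather than on one fixed four-cell partition, which is precisely what is needed to detect the counterexample above.
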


\section{Testing Comparability of the Subspaces}
In this section, the main goal is to prove Theorem~\ref{thm:test_comparability} 
(restated below for the convenience of the reader). We recall that Theorem~\ref{thm:test_comparability} gives an efficient
algorithm which given samples from a mixture of two subspaces $U, V$, decides whether $U$ and $V$ are comparable. This result in turn is an important piece in our 
subspace recovery algorithm in the ``incomparable" case. The algorithm \textsc{Test-comparability} is described in Figure~\ref{alg:test_comparability}. 
\restatetestcomparability*

The main idea of the algorithm is the following. First we take a few samples from the mixture to get $\sfspan(U \cup V)$. By dimension reduction, it suffices to deal with the case $\sfspan(U\cup V)=\bbF^n$. The crucial property we use is the following: If $\sfspan(U\cup V)=\bbF^n$, $U,V$ are incomparable iff there exists non-zero $p\in \sfRM(n,2)$ such that $p$ vanishes on the entire set $U \cup V$. The proof of Theorem~\ref{thm:test_comparability} is deferred to the end of the section -- to start, we prove some auxiliary lemmas. 

\begin{algorithm2e}
\caption{\sc Test-Comparability}
\label{alg:test_comparability}
\KwIn{
\\
$n$ -- ambient dimension \\
$\sampleouv$ -- oracle for random samples from mixture of subspaces. \\
$\wnaught$ -- lower bound of two mixture weights. \\
}
\KwOut{True (if comparable) or False (if incomparable)}
Set $t=16n/(\wnaught^2)$\;
Sample $\bx_1,\cdots, \bx_t$ from $\sampleouv$\;
Set $S=\sfspan(\bx_1,\cdots,\bx_t),v=dim(S)$\;
Find $y_1,\cdots,y_v$ such that they form a basis of $S=\sfspan(\bx_1,\cdots,\bx_t)$.\;
Find a matrix $D\in\bbF^{v\times n}$ such that $D y_i=e_i$ for all $i$, where $e_i$ is the $i$th element of the standard basis of $\bbF^v$.\;
Set $\mathcal{O}'=D \sampleouv = \Dsampleouv$\cnote{This is explained in prelims:useful notation}\;
Set $r=8n^2/\wnaught$\;
Sample $\bz_1,\cdots, \bz_r$ from $\mathcal{O}'=\Dsampleouv$\;
Use algorithm \textsc{Size-System-Polynomial} to compute $T=\setsize{\mathset{p\in \sfRM(v,2): p(\bz_1)=p(\bz_2)=\cdots=p(\bz_r)=0}}$\;\tcp{See \cref{rmk:gassianelimination}}.
\uIf{$T=1$}
{\Return{\True}\;}
\Else
{\Return{\False}\;}
\end{algorithm2e}


\begin{claim}\label{lem:spanofunion}
Assume $s\ge 8n/\wnaught$. Let $\bx_1,\bx_2,\cdots, \bx_s$ be sampled from a mixture of two subspaces $U, V \subseteq \bbF^n$(potentially comparable) of dimension at most $d$ with mixing weights $w_U, w_V \ge \wnaught$. Then, with probability at least $1-\exp(-s \wnaught^2/32)$, 
$\sfspan(\bx_1,\cdots,\bx_s)=\sfspan(U \cup V)$.
\end{claim}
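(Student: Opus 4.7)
The plan is to analyze the incremental span $S_i := \sfspan(\bx_1, \ldots, \bx_i)$ and show that, as long as $S_{i-1} \ne W := \sfspan(U \cup V) = U + V$, the next sample strictly enlarges it with probability at least $\wnaught/2$. Since $\dim W \le n$, at most $n$ such enlargements are ever needed, and with $s \ge 8n/\wnaught$ samples a standard lower-tail Chernoff bound will give the desired concentration.

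First I would establish the one-step estimate: conditioned on the history $\mathcal{F}_{i-1}$, whenever $S_{i-1} \ne W$, we have $\Pr[\bx_i \notin S_{i-1} \mid \mathcal{F}_{i-1}] \ge \wnaught/2$. Indeed, $S_{i-1} \subseteq W$ always, and if $S_{i-1}$ contained both $U$ and $V$ it would contain $W$, contradicting $S_{i-1} \ne W$. Hence some $X \in \{U, V\}$ satisfies $X \not\subseteq S_{i-1}$. Then $X \cap S_{i-1}$ is a proper subspace of $X$, so $|X \cap S_{i-1}| \le |X|/2$, meaning a uniform sample from $X$ lies outside $S_{i-1}$ with probability at least $1/2$. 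Combining with the probability $w_X \ge \wnaught$ that $\bx_i$ is drawn from $X$ gives the estimate.

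Next I would convert this into a concentration statement. Let $\tau := \min\{i : S_i = W\}$. For $i \le \tau$, the indicator $E_i := \mathbf{1}[\bx_i \notin S_{i-1}]$ raises $\dim S_i$ by exactly $1$ whenever it equals $1$, so the event $\tau > s$ forces $\sum_{i=1}^s E_i < \dim W \le n$. Extending the definition past $\tau$ by auxiliary independent $\mathrm{Bernoulli}(\wnaught/2)$ bits, the sequence $(E_i)$ satisfies $\Pr[E_i = 1 \mid \mathcal{F}_{i-1}] \ge \wnaught/2$ unconditionally, and hence $\sum_{i=1}^s E_i$ stochastically dominates $\mathrm{Bin}(s, \wnaught/2)$. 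Applying the lower-tail Chernoff bound with mean $\mu = s\wnaught/2 \ge 4n$ and deviation factor $3/4$ yields
\[
\Pr[\tau > s] \;\le\; \Pr\!\left[\mathrm{Bin}(s, \wnaught/2) < n\right] \;\le\; \exp\!\left(-(3/4)^2 \mu/2\right) \;=\; \exp(-9s\wnaught/64),
\]
which is at most $\exp(-s\wnaught^2/32)$ because $\wnaught \le 1/2$ (since $w_U + w_V = 1$ and both are at least $\wnaught$). The only genuine content is the one-step estimate; the rest is standard stopping-time plus Chernoff bookkeeping, so I expect no real obstacle.
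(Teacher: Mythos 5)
Your proof is correct and takes essentially the same route as the paper: the identical one-step estimate that the span grows with conditional probability at least $\wnaught/2$, followed by a concentration argument. The only difference is cosmetic — the paper packages the concentration step as a self-contained ``generalized Chernoff bound'' (\thref{lem:azumadependentchernoff}) proved via a coupling and Azuma--Hoeffding, whereas you inline an equivalent coupling to a $\mathrm{Bin}(s, \wnaught/2)$ variable and apply a standard binomial lower-tail Chernoff bound (which in fact yields the slightly sharper rate $\exp(-9s\wnaught/64)$ before you relax it to the claimed $\exp(-s\wnaught^2/32)$).
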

\begin{proof}
For fixed $x_1,\cdots, x_i$ such that $\sfspan(x_1,\cdots,x_i)\subsetneq \sfspan(U \cup V)$, we will show
\begin{align}~\label{eq:numpy1}
    \bbP_{\bx_{i+1}}[\bx_{i+1}\notin \sfspan(x_1,\cdots, x_i)]\ge \wnaught/2.
\end{align}
Define $W=\sfspan(x_1,\cdots,x_i)$. By our assumption, either $U\nsubseteq W$ or $V\nsubseteq W$. Let us assume that it is the former (the other case is 
symmetric). Under this assumption, $U\cap W$ is a proper subset of $U$. \cadnote{Since both are linear subspaces and the size of any linear space over $\bbF$ is always a power of $2$, $|U\cap W|\le 0.5 |U|$. }Hence
\begin{align*}
    \bbP[\bx_{i+1}\in U\backslash W]\ge w_U\frac{|U\backslash W|}{|U|}\ge \wnaught\cdot 0.5.
\end{align*}
In other words, $rank(x_1,\cdots, \bx_{i+1})=rank(x_1,\cdots, x_i)+1$ will hold with probability at least $\wnaught/2$, thus proving \eqref{eq:numpy1}. Define $\by_i=\rank(\bx_1,\cdots,\bx_i)-\rank(\bx_1,\cdots,\bx_{i-1})$, then $\by_1,\cdots,\by_s$ satisfy the condition of \thref{lem:azumadependentchernoff} with $\gamma=\wnaught/2, d=\rank(U\cup V),k=s$. \cref{lem:spanofunion} now follows by applying \thref{lem:azumadependentchernoff}.
\end{proof}
The next (easy) claim says that suppose the  distribution $\bZ$ (over $\bbF^d$)  is \emph{not too concentrated on any single element}. Then, a randomly chosen set 
of size roughly quadratic in $d$ is a \emph{hitting set} for quadratic polynomials over $\bbF^d$. In other words, any non-zero element of $\sfRM(d,2)$ is non-zero 
on at least one element of this set. 
\begin{claim}\label{lem:onlyzerosurvive}
Let $\bZ$ be a distribution over $\bbF^d$ such that the probability weight of every element is at least $w^*/2^d$. Let $\bx_1,\bx_2, \dots, \bx_t$ be independent sampled from $\bZ$. Then, we have
\begin{align*}
    \bbP\Big[ \forall q \in \sfRM(d,2)\setminus \{0\}, \exists j \in [t] \text{ s.t. } q(\bx_j) \ne 0 \Big]\ge 1-\exp\left(-t w^*/4+\binom{d}{\le 2}\log 2 \right).
\end{align*}
\end{claim}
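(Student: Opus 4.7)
The plan is to combine a single-polynomial non-vanishing estimate (leveraging Claim~\ref{lem:zeroormany}) with a union bound over the at most $2^{\binom{d}{\le 2}}$ non-zero polynomials in $\sfRM(d,2)$. The key observation that lets us transfer the uniform-distribution bound of Claim~\ref{lem:zeroormany} to the distribution $\bZ$ is the pointwise lower bound $\bZ(x) \ge w^*/2^d$, which essentially means $\bZ$ dominates a $w^*$-fraction of the uniform measure.

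Concretely, first I would fix an arbitrary non-zero $q \in \sfRM(d,2)$. By Claim~\ref{lem:zeroormany} applied with $c=2$, we have $|\{x \in \bbF^d : q(x) \ne 0\}| \ge 2^d/4$. Since every element of $\bbF^d$ has $\bZ$-mass at least $w^*/2^d$, this gives the transferred bound
\begin{align*}
    \bbP_{\bx \sim \bZ}[q(\bx) \ne 0] \;\ge\; \frac{2^d}{4} \cdot \frac{w^*}{2^d} \;=\; \frac{w^*}{4}.
\end{align*}
By independence of $\bx_1,\ldots,\bx_t$, this immediately yields
\begin{align*}
    \bbP[q(\bx_j) = 0 \text{ for all } j \in [t]] \;\le\; \left(1 - \frac{w^*}{4}\right)^t \;\le\; \exp(-tw^*/4).
\end{align*}

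Next I would take a union bound over all non-zero $q \in \sfRM(d,2)$. A polynomial in $\sfRM(d,2)$ is determined by its coefficients on the $\binom{d}{\le 2}$ multilinear monomials of degree at most $2$, so $|\sfRM(d,2)| = 2^{\binom{d}{\le 2}}$. Hence the number of non-zero polynomials is at most $2^{\binom{d}{\le 2}}$, and the union bound gives
\begin{align*}
    \bbP[\exists q \ne 0,\ \forall j \in [t],\ q(\bx_j) = 0] \;\le\; 2^{\binom{d}{\le 2}} \cdot \exp(-tw^*/4) \;=\; \exp\!\left(\binom{d}{\le 2}\log 2 - tw^*/4\right),
\end{align*}
which is precisely the complement of the event in the claim.

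There is no real obstacle here; the only slightly subtle point is the transfer step from the uniform distribution to $\bZ$, which relies crucially on the pointwise mass lower bound rather than some weaker total-variation condition. The shape of the final bound (the $\binom{d}{\le 2}\log 2$ term and the $tw^*/4$ term) essentially dictates the proof: the $\log 2$ factor flags a union bound over binary-coefficient polynomials, and the $w^*/4$ factor reflects $(1/2^c) \cdot w^* = w^*/4$ for $c=2$.
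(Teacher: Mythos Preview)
Your proposal is correct and follows essentially the same approach as the paper: apply Claim~\ref{lem:zeroormany} with $c=2$ to get a $1/4$ uniform lower bound on the non-zero set of any fixed $q$, transfer this to $\bZ$ via the pointwise mass lower bound to obtain $\bbP_{\bx\sim\bZ}[q(\bx)\ne 0]\ge w^*/4$, use independence over the $t$ samples, and finish with a union bound over $|\sfRM(d,2)|=2^{\binom{d}{\le 2}}$.
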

\begin{proof}
Fix $q\in \sfRM(d,2)$ such that $q\neq 0$. By \cref{lem:zeroormany},
\begin{align*}
    \bbP_{\bx\sim_u \bbF^d}[q(\bx)=1]\ge 1/4.
\end{align*}
As a consequence,
\begin{align*}
    \bbP_{\bx\sim Z}[q(\bx)=0]\le 1-\frac{w^*}{4}.
\end{align*}
Hence
\begin{align*}
    \bbP[q(\bx_1)=\cdots=q(\bx_t)=0]\le (1-w^*/4)^t\le \exp(-t w^*/4).
\end{align*}
\chnote{Notice that $|\sfRM(d,2)|=2^{\binom{d}{\le 2}}$}. Using the union bound, we get the claim. 
\end{proof}
\annnote{Aidao, we discussed this. Please do not use Lemma for every intermediate proposition. Lemma should be reserved for main technical statements which are not standalone theorems. Please change them to claims etc.}



We are now ready to finish the proof of Theorem~\ref{thm:test_comparability}.

\begin{proofof}{Theorem~\ref{thm:test_comparability}}
\WLOG, we assume $\delta=0.1$, since we can always boost the probability \chnote{at a multiplicative cost of $\log(1/\delta)$.} 
By \cref{lem:spanofunion}, we know that $S=\sfspan(U\cup V) $ (defined in Step~3 of the algorithm) with probability $0.999$. Henceforth, we assume that $S=\sfspan(U\cup V) $ holds. 

By definition, $D$ (defined in Step~5 of the algorithm) is a linear bijection between $S$ and $\bbF^v$. Hence $DU,DV$ are incomparable if and only if $U,V$ are incomparable. Now observe that, $\mathcal{O}'=\Dsampleouv$ will give samples from mixture of two subspaces $DU, DV$ with mixing weights $w_U, w_V \ge \wnaught$. Notice that $\sfspan(DU \cup DV)=\bbF^v$. We divide the rest of the analysis into two cases.
\newline
Case 1: $DU,DV$ are comparable.
\newline
We have $DU=\bbF^v$ or $DV=\bbF^v$. By \cref{lem:onlyzerosurvive}, with probability $0.999$, there will only be one polynomial (the zero polynomial) in
the set $\mathset{p\in \sfRM(v,2): p(\bz_1)=p(\bz_2)=\cdots=p(\bz_r)=0}$. 
In this case, $T=1$. Thus, overall, with probability $0.998$, algorithm returns the correct answer in this case. 
\newline
Case 2: $DU,DV$ are incomparable.
\newline
In this case, $dim(DU)\le v-1$ (and $dim(DV)\le v-1$). Thus, there exists non-zero vector $b_U$ (resp. $b_V$) such that $\iprod{b_U,DU}=\mathset{0}$ (resp. $\iprod{b_V,DV}=\mathset{0}$). Now, consider the non-zero polynomial $p(x)=\iprod{b_U,x}\iprod{b_V,x}$. By definition it satisfies $p(DU \cup DV)=\mathset{0}$. 
Thus, in this case, the set $\mathset{p\in \sfRM(v,2): p(\bz_1)=p(\bz_2)=\cdots=p(\bz_r)=0}$ has at least two elements. Thus, overall, with probability $0.999$, the algorithm returns the correct answer in this case. 
\end{proofof}

\section{Learning Mixtures of Incomparable Subspaces}

\newa{
In this section, we give a polynomial time algorithm (Algorithm~\ref{alg:incpr_subspace}: \algincprrcv{}) for recovering the subspaces $\Amax, \Amin$ when given access to samples from a mixture of two subspaces that are incomparable. We prove the following theorem.  

}

\restatefastincprrecovery*

\newa{The main idea is a new procedure for {\em dimension reduction} that reduces the subspace clustering problem to $O(1)$ dimensions. We will construct a linear map $M\in\bbF^{10\times n}$ such that after projecting using $M$, the subspaces obtained $M\Amax=\mathset{Mx : x \in \Amax}$ and $M\Amin= \mathset{Mx: x \in \Amin}$ are incomparable. The construction of $M$ involves multiple rounds. In each round, we use Algorithm {\sc Test-Comparability} (and Theorem~\ref{thm:test_comparability}) as a black-box, and find a projection that brings down the dimension by one with high probability, while maintaining incomparability of the subspaces. Once we recover the subspaces $M\Amax, M\Amin$ in $O(1)$ dimensions (\cadnote{using a brute force algorithm: enumerate all possible pairs of subspace, then use \thref{thm:hypotest:main}}), we can then recover the original subspaces $\Amax, \Amin$ by considering samples in $\Amax \cup \Amin$ which are not mapped to $M \Amax \cap M \Amin$ by $M$.  
We defer the proof of Theorem~\ref{thm:intro:main} to the end of section.
}

\begin{algorithm2e}
\caption{\algincprrcv{}}
\label{alg:incpr_subspace}
\KwIn{\\
$n$ -- ambient dimension. \\
$\sampleo$ -- oracle for random samples from mixture of subspaces. \\
$\wnaught$ -- lower bound of two mixture weights. \\ \anote{What is the output?}}
\KwOut{two subspaces.}
$M$=\textsc{Find-A-Good-Projector}($n,\sampleo,\wnaught$)\;
Use brute force to solve \algincprrcv{}($10,M\sampleo,\wnaught$), let $U,V$ be the output \;
Set $t=100n/\wnaught$\;
Sample $\bx_1,\cdots,\bx_t$ from $\sampleo$\;
\Return{$\sfspan(\mathset{\bx_i: M\bx_i\notin V}),\sfspan(\mathset{\bx_i:M \bx_i\notin U})$}\;
\end{algorithm2e}

\newa{ The following lemma is crucial in establishing Theorem~\ref{thm:intro:main}. The lemma proves that with high probability, Algorithm {\sc Find-A-Good-Projector} (Algorithm~\ref{alg:find_projector}) reduces the dimension to $r=10$ while {\em preserving the incomparability} of the subspaces. If $M$ is randomly chosen from $\bbF^{10\times n }$, then $M \Amin \subseteq M \Amax$ since $M \Amax$ collapses to $\bbF^{10}$ with high probability. Algorithm {\sc Find-A-Good-Projector} instead proceeds in multiple rounds, and reduces the dimension one per round. If the projector $\bfM'$ is chosen uniformly at random from $\bbF^{(n-1)\times n}$, with constant probability $\bfM'\Amax, \bfM'\Amin \in \bbF^{n-1}$ remain incomparable. We can now use Algorithm {\sc Test-Comparability} (and Theorem~\ref{thm:test_comparability}) to boost the success probability in each round by repeatedly sampling $M'$ and rejecting it if the resulting subspaces are comparable.     
 }

\begin{lemma}\label{lem:findagoodprojector}
Given samples from a mixture of two incomparable subspaces $\Amax, \Amin \subseteq \bbF^n$ with mixing weights $\wfirst, \wsecond \ge \wnaught$. There exists $M\in \bbF^{10\times n}$ such that $M\Amax,M\Amin$ are incomparable subspaces. Moreover, there is an algorithm \algfindagoodprojector{} that runs in time $1/\wnaught\cdot \text{poly}(n)$ and find such a $M$ with probability at least $0.999$.
\end{lemma}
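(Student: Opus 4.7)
The plan is to prove both parts of the lemma via a multi-round dimension-reduction scheme, where each round peels off exactly one ambient dimension while preserving incomparability. The algorithm \algfindagoodprojector{} will maintain a matrix $M_t \in \bbF^{(n-t) \times n}$ with $M_0 = I_n$, and in round $t+1$ sample a uniformly random $M' \in \bbF^{(n-t-1) \times (n-t)}$ and run \algtestcompr{} on the oracle $(M' M_t)\sampleo$. This oracle samples from a mixture of the subspaces $M' M_t \Amax$ and $M' M_t \Amin$ with the unchanged mixing weights $\wfirst, \wsecond$, so the lower bound $\wnaught$ is preserved. The algorithm accepts the first $M'$ for which \algtestcompr{} returns \False{} (certifying incomparability), sets $M_{t+1} := M' M_t$, and iterates. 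After $n - 10$ rounds it outputs $M := M_{n-10} \in \bbF^{10 \times n}$. The existence assertion is then a free corollary: because the randomized algorithm succeeds with positive probability, such an $M$ must exist.

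The technical core is the following sub-claim: for any incomparable $A, B \subseteq \bbF^m$ with $m \geq 11$, a uniformly random $M' \in \bbF^{(m-1) \times m}$ keeps $M'A, M'B$ incomparable with probability at least some absolute constant $c_0 > 0$. I will first condition on the event $E$ that $M'$ has full row rank $m-1$, which has probability $\prod_{j \geq 2}(1 - 2^{-j}) > 0.28$; under $E$, $\ker(M') = \spann(v)$ with $v$ uniform over $\bbF^m \setminus \{0\}$, and the key identity $M'A \subseteq M'B \iff A \subseteq B + \spann(v)$ reduces everything to understanding this containment. A case split on $k_A := \dim A - \dim(A \cap B) \geq 1$ (and the symmetric $k_B$) gives: if $k_A \geq 2$, then no $v$ achieves the containment, by a dimension count in $\bbF^m/B$; if $k_A = 1$, the bad $v$'s form a single $B$-coset $a + B$ for any fixed $a \in A \setminus B$. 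Combining both bad events by inclusion-exclusion (two cosets intersect in a coset of $A \cap B$) bounds the total bad probability when $\dim A = \dim B = d \leq m - 2$. The tightest case is when both $A, B$ are hyperplanes ($\dim A = \dim B = m-1$), which I will handle separately: $M'A, M'B$ are incomparable precisely when $v \in A \cap B \setminus \{0\}$, an event of probability $(2^{m-2} - 1)/(2^m - 1) \geq 1/5$ for $m \geq 4$. In all cases one obtains $c_0 \geq 0.28 \cdot 1/5 > 0.05$.

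Granted the sub-claim, I will set $T = O(\log(n/\delta)/c_0)$ and invoke \algtestcompr{} at per-call failure probability $\delta' = \Theta(\delta/(nT))$ on each of the $T$ independent candidates in every round. The probability a round fails (no truly-incomparable $M'$ is ever accepted) is at most $(1 - c_0)^T + T\delta' \leq \delta/n$; a union bound over the $n - 10$ rounds gives overall success probability at least $1 - \delta = 0.999$ as required. Each round costs $T$ invocations of \algtestcompr{}, and by Theorem~\ref{thm:test_comparability} the total runtime is $\poly(n, 1/\wnaught)$, matching the claimed bound.

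The main obstacle will be the sub-claim, specifically the worst case where $A, B$ are both hyperplanes of $\bbF^m$ (so $k_A = k_B = 1$ and $\dim A = \dim B = m-1$). Here the naive union bound over the two containment events $M'A \subseteq M'B$ and $M'B \subseteq M'A$ gives a total bad probability of $\approx 1$, useless for a lower bound; instead one must compute the joint event directly, exploiting the observation that both $M'A$ and $M'B$ remain proper (distinct) hyperplanes of $\bbF^{m-1}$ — and therefore incomparable — precisely when $v$ lies in the codimension-2 subspace $A \cap B$. A secondary subtlety is uniformly verifying $c_0 > 0.05$ across the intermediate cases (e.g.\ $k_A \geq 2$ combined with $\dim A$ near $m-1$); these follow by the same coset-counting argument but require individual bookkeeping.
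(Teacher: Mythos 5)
Your overall algorithm (round-by-round dimension reduction, with \algtestcompr{} as a verifier to boost the per-round success probability, finishing with a union bound, and deriving existence of $M$ as a by-product of positive success probability) matches the paper exactly. The place where you genuinely diverge is the core probabilistic sub-claim that a uniformly random $\bfT \in \bbF^{(i-1) \times i}$ preserves incomparability of the current projected pair $U, V$ with constant probability.

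The paper's argument splits on $\mathrm{rank}(U \cup V)$: if it is at most $i-1$, they use the fact that $\bfT$ restricted to $\spann(U \cup V)$ is injective with probability $\ge 1/4$; if it equals $i$, they construct an ordered basis $a_1,\dots,a_i$ of $\bbF^i$ in which a designated $U$-only vector and a designated $V$-only vector are placed last, and then condition on an explicit event $\mathcal{E}$ about the ranks of prefixes of the image sequence $\bfT a_1, \dots, \bfT a_i$, yielding the constant $9/128$. Your argument instead conditions on $\bfT$ having full row rank (probability $\ge \prod_{j\ge 2}(1-2^{-j}) > 0.28$), identifies $\ker(\bfT) = \spann(v)$ with a uniformly random nonzero $v \in \bbF^i$, and uses the clean reformulation $\bfT U \subseteq \bfT V \iff U \subseteq V + \spann(v)$. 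This reduces the whole question to counting which $v$ land in certain cosets; the cases fall out naturally by $k_U := \dim U - \dim(U \cap V)$ and $k_V$, with the tight case $k_U = k_V = 1$ handled by inclusion-exclusion on two cosets. Your worst-case bound $(2^{m-2}-1)/(2^m-1) \ge 1/5$ (for $m \ge 4$; in fact $m = i \ge 11$ in the algorithm, so it is even better) combined with the full-rank probability gives $c_0 > 0.05$, comparable to the paper's $9/128 \approx 0.07$.

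Both routes are correct and give the same downstream polynomial running time; yours is arguably more transparent. Viewing $\bfT$ through its kernel turns the question into a coset-counting problem in the original space, avoids the basis reordering and the somewhat opaque event $\mathcal{E}$, and unifies the paper's two cases: when $\mathrm{rank}(U\cup V) \le i-1$, a generic $v$ avoids $\spann(U\cup V)$ entirely and incomparability is preserved trivially, so you never need to treat that case separately. The only small caveats are bookkeeping ones you already flag: you need to verify the constant uniformly over the case split on $(k_U, k_V)$ and $\dim U, \dim V$, and your bounded-retry version of the while loop (capped at $T = O(\log n)$ tries per round, with \algtestcompr{} failure probability set to $\delta' = \Theta(\delta/(nT))$) is a valid and somewhat cleaner presentation of what the paper's unbounded while loop achieves implicitly.
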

\begin{algorithm2e}
\caption{\sc Find-A-Good-Projector}
\label{alg:find_projector}
\KwIn{\\
$n$ -- ambient dimension \\
$\sampleo$ -- oracle for random samples from mixture of subspaces. \\
$\wnaught$ -- lower bound of two mixture weights. \\
}
\KwOut{
a matrix $M\in\bbF^{10\times n}$.
}
Set $M=I_{n}$, where $I_{n}\in \bbF^{n\times n}$ is the identity matrix\;
\For{$i=n;i> 10;i=i-1$}
{
Sample $\bfT\in\bbF^{(i-1)\times i}$ uniformly at random\;
\While{\algtestcompr{}\newa{$(i,\bfT M\sampleo,\wnaught,1/n^2)$} \tcp{the last parameter is the failure probability we want.}} 
{
Sample $\bfT\in\bbF^{(i-1)\times i}$ uniformly at random\; \anote{Changed from $T \in \bbF^{i \times (i-1)}$}
}
$M=\bfT M$\;
}
\Return{$M$}\;
\end{algorithm2e}
\anote{9/29: Again, this $TMO(\Amax,\Amin,\dots,\dots)$ is so hard to read/ understand. Can you please elaborate on this as in Algorithm 1?}
\begin{proof}
\newa{ We now show that Algorithm {\sc Find-A-Good-Projector} runs in polynomial time and finds a required projector $M$ with high probability. 
Observe that from Theorem~\ref{thm:test_comparability}, every call of \algtestcompr{} (in step 4 of Algorithm~\ref{alg:find_projector}) fails with probability at most $\delta = O(1/n^2)$. We will prove that at any iteration $i \in \mathset{n,n-1,\dots,11}$, a randomly chosen matrix $\bfT \in \bbF^{(i-1) \times i}$ (in step 3) succeeds with constant probability in preserving the incomparability of the subspaces. This ensures that it will suffice to sample $O(\log n)$ many random $T$ per round before we succeed in that round (and hence $O(n \log n)$ overall).       

}

Fix an iteration $i \in \mathset{n,n-1,\dots,11}$, and let $M \in \bbF^{i \times n}$ be the current projector. Let $U:=M\Amax, V:=M\Amin$, and assume $U, V$ are incomparable. We show the following claim. 

\noindent {\em {\bf Claim:}
For a random $\bfT \in \bbF^{(i-1)\times i}$ chosen in step 3,}
\begin{align}
    \bbP_\bfT[\bfT U,\bfT V\text{ are incomparable}]\ge 9/128. \label{eq:claim:proof}
\end{align}

\noindent We now prove the claim by considering two cases depending on the rank of $U \cup V$ i.e., the dimension of the span of $U \cup V$.

\vspace{4pt}
\noindent {\bf Case 1: $rank(U \cup V)\le i-1$.}
\newline
Let $v=rank(U\cup V)$ and  $b_1,\cdots, b_v$ be a basis of $\sfspan(U \cup V)$. By \cref{lem:anotherviewofrandomprojection}, $\bfT b_1,\cdots, \bfT b_v$ can be viewed as being sampled independently from $\bbF^{i-1}$. A uniformly random matrix from $\bbF^{(i-1)\times(i-1)}$ is full-rank with probability at least $\prod_{j\ge 1}(1-2^{-j}) \ge 1/4$. Hence,
\begin{align*}
    \bbP[\bfT b_1,\cdots, \bfT b_v\text{ are linearly independent}]\ge 1/4.
\end{align*}
When $\bfT b_1,\cdots, \bfT b_v\text{ are linearly independent}$, $\bfT U,\bfT V\text{ are incomparable}$ as required. This establishes \eqref{eq:claim:proof} in Case 1.

\noindent {\bf Case 2: $rank(U \cup V)= i$.}
\newline
Let $b_1,\dots, b_{dim(U\cap V)}$ be a basis of $U\cap V$. We extend the basis such that
\newline
$b_1,\dots, b_{dim(U\cap V)},c_1,\dots, c_{dim(U)-dim(U\cap V)}$ is a basis of $U$, and 
similarly we extend the basis \\
so that $b_1,\dots, b_{dim(U\cap V)},d_1,\dots, d_{dim(V)-dim(U\cap V)}$ is a basis of $V$. Observe that
\newline
$b_1,\dots, b_{dim(U\cap V)},c_1,\dots, c_{dim(U)-dim(U\cap V)},d_1,\dots, d_{dim(V)-dim(U\cap V)}$ is a basis of $\sfspan(U\cup V)$. Reorder this basis to get $a_1,\dots, a_i$ such that $a_{i-1}=c_1,a_{i}=d_1$. Let $\bft_j$ denote $\bfT a_j$. By \cref{lem:anotherviewofrandomprojection}, $\bft_1,\cdots,\bft_i$ are independent and identically distributed. Let $\mathcal{E}$ be the event 
\begin{equation*}
\mathcal{E}= \begin{cases}
\bft_j\notin \sfspan(\bft_1,\cdots,\bft_{j-1}) & \forall 1\le j\le i-3\\
\bft_{i-2}\in \sfspan(\bft_1,\cdots,\bft_{i-3})\\
\bft_{i-1}\notin \sfspan(\bft_1,\cdots,\bft_{i-2})\\
\bft_{i}\notin \sfspan(\bft_1,\cdots,\bft_{i-1})
\end{cases}
\end{equation*}
Then,
\begin{align*}
    \bbP_\bfT[\mathcal{E}]=(\prod_{j=1}^{i-3}(1-2^{j-1}/2^{i-1}))\cdot 1/4 \cdot 3/4 \cdot 1/2\ge 3/4\cdot 3/32=9/128.
\end{align*}
Condition on $\mathcal{E}$. We now show that $\bfT U, \bfT V$ are incomparable as required. We will show $\bfT U\nsubseteq \bfT V$, the other direction is similar. By definition $\bft_{i-1}=\bfT a_{i-1}=Tc_1\in TU$, and $\bft_{i-1}\notin \sfspan(\bft_1,\bft_2,\cdots, \bft_{i-2},\bft_{i})$. However $\bfT V\subseteq \sfspan(\bft_1,\bft_2,\cdots, \bft_{i-2},\bft_{i})$, hence $\bft_{i-1}\notin \bfT V$, $\bfT U\nsubseteq \bfT V$. This establishes \eqref{eq:claim:proof}. Hence the lemma follows. 
\end{proof}

\newa{The following lemma shows that a few samples drawn uniformly from $S \setminus T$ suffice to recover $S$ with high probability. This will allow us to recover $\Amax$ and $\Amin$ after clustering the points in $M\Amax \cup M\Amin$.}

\begin{lemma}\label{lem:minuspropersubspace_spanall}
Let $S$ be a subspace of $\bbF^n$ and of dimension $d$. Let $T$ be a proper subspace of $S$. Let $t\ge 8n$ be a integer. $\bx_1,\cdots,\bx_t$ are independently uniformly sampled from $S\backslash T$. Then,
\begin{align*}
    \bbP[\sfspan(\bx_1,\cdots,\bx_t)=S ]\ge 1-e^{-t/128}.
\end{align*}
\end{lemma}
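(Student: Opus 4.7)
The plan is to set up a rank-growth process and apply the same Azuma-type concentration inequality \thref{lem:azumadependentchernoff} that was used in the proof of Claim~\ref{lem:spanall}. Let $W_i := \sfspan(\bx_1, \ldots, \bx_i)$ and let $\by_i := \rank(\bx_1,\ldots,\bx_i) - \rank(\bx_1,\ldots,\bx_{i-1}) \in \{0,1\}$. Since $W_i \subseteq S$ for all $i$, we have $\sum_{i=1}^t \by_i \le d$, and $\sfspan(\bx_1,\ldots,\bx_t) = S$ precisely when $\sum_{i=1}^t \by_i = d$. So it suffices to show a lower bound of roughly $1/4$ on the conditional probability of incrementing the rank whenever the current span is a proper subspace of $S$.

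The key step is the following one-step rank-increment estimate. Fix a history such that $W_i \subsetneq S$. Then both $W_i$ and $T$ are proper subspaces of $S$, so by Claim~\ref{lem:cannotspanall} we have $|S \setminus (W_i \cup T)| \ge 2^{d-2}$. Since $\bx_{i+1}$ is uniform on $S \setminus T$ and $|S \setminus T| \le |S| = 2^{d}$,
\begin{align*}
\bbP[\bx_{i+1} \notin W_i \mid \bx_1,\ldots,\bx_i] \;=\; \frac{|S \setminus (W_i \cup T)|}{|S \setminus T|} \;\ge\; \frac{2^{d-2}}{2^d} \;=\; \frac{1}{4}.
\end{align*}
In other words, $\bbP[\by_{i+1} = 1 \mid \bx_1,\ldots,\bx_i, \, W_i \subsetneq S] \ge 1/4$, while trivially $\by_{i+1} = 0$ once $W_i = S$.

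Thus the sequence $\by_1, \ldots, \by_t$ satisfies the hypothesis of \thref{lem:azumadependentchernoff} with parameters $\gamma = 1/4$, dimension $d$ and $k = t$. Applying that concentration inequality exactly as in the proof of Claim~\ref{lem:spanall}, we get $\sum_{i=1}^t \by_i = d$ except with probability at most $e^{-t/128}$ (since $t \ge 8n \ge 8d$ gives us plenty of slack: the expected number of successes is at least $t/4$, far exceeding the target $d$, and the constant $1/128$ is consistent with $\gamma^2/8$ in the standard multiplicative Chernoff/Azuma bound). This yields the claimed bound $\bbP[\sfspan(\bx_1,\ldots,\bx_t) = S] \ge 1 - e^{-t/128}$.

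The only non-routine ingredient is the one-step rank-increment estimate, and even that is immediate from Claim~\ref{lem:cannotspanall}. I do not foresee any real obstacle; the proof is essentially a cleaner version of the argument in Claim~\ref{lem:spanall}, with the role of the two mixture components replaced by the pair of proper subspaces $W_i$ and $T$.
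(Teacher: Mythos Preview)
Your proposal is correct and follows essentially the same approach as the paper: it uses Claim~\ref{lem:cannotspanall} to show the one-step rank-increment probability is at least $1/4$, then defines the increments $\by_i$ and applies \thref{lem:azumadependentchernoff} with $\gamma=1/4$, $d=\dim(S)$, $k=t$ to obtain the $1-e^{-t/128}$ bound. Your write-up is slightly more explicit (e.g., writing out $|S\setminus(W_i\cup T)|/|S\setminus T|$), but the argument is identical in substance.
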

\begin{proof}
Let $V\subsetneq S$ be a fixed subspace. Then by \cref{lem:cannotspanall}, $|S\backslash (T\cup V)| \ge 2^{d-2}$, which is at least $1/4$ of $|S|$. We have
\begin{align*}
    \bbP_{\bx\sim_u S\backslash T}[\bx\notin V]\ge 1/4.
\end{align*}
In other words, if $\sfspan(\bx_1,\cdots,\bx_k)\neq S$, then $rank(\bx_1,\cdots, \bx_{k+1})=rank(\bx_1,\cdots\bx_k)+1$ will hold with probability at least $1/4$. Define the random variables $\by_i=\rank(\bx_1,\cdots,\bx_i)-\rank(\bx_1,\cdots,\bx_{i-1})$ for $i \in \mathset{1,2,\dots, t}$. Note that $\by_1,\cdots,\by_t$ are not quite independent (since the probability the rank increases at step $i$ depends on the random choices of $\bx_1, \dots, \bx_{i-1}$ in previous iterations).  But they satisfy the condition of \thref{lem:azumadependentchernoff} with $\gamma=1/4, d=\dim(S),k=t$.
The proof is completed after applying \thref{lem:azumadependentchernoff}.
\end{proof}

We are now ready to complete the proof of Theorem~\ref{thm:incpr_rcv}.

\begin{proofof}{Theorem~\ref{thm:incpr_rcv}}
\WLOG, we assume $\delta=0.1$, since we can always boost the probability at a multiplicative cost of $\log(1/\delta)$.
By Lemma~\ref{lem:findagoodprojector}, $M$ satisfies the property that $M\Amax,M\Amin$ are incomparable with high probability (probability at least $0.999$, say). 
Moreover assuming $M\Amax,M\Amin$ are incomparable, the brute force algorithm will return them with high probability.

Let $U=M\Amax,V=M\Amin$.
We will show that $\sfspan(\mathset{\bx_i: M\bx_i\notin V}=\Amax$ with probability $0.998$. 
Observe that $W=\mathset{x\in \Amax: Mx\in  M\Amin}$ is a proper subspace of $\Amax$. Hence if $\bx$ is drawn uniformly from $\Amax$, $\bx$ will not in $W$ with probability at least $1/2$. By Chernoff bound, we expect to see at least $20n$ samples in $\mathset{\bx_i: M\bx_i\notin V}$ with probability $0.999$ and all these samples can be viewed as uniformly drawn from $\Amax\backslash W$. By Lemma~\ref{lem:minuspropersubspace_spanall}, $\sfspan(\mathset{\bx_i: M\bx_i\notin M\Amin}=\Amax$ with probability $0.998$. A similar argument shows that the algorithm also recovers $\Amin$ with high probability. Finally, after recovering $\Amax, \Amin$ it is also easy to estimate the weights $w_0,w_1$ to inverse polynomial accuracy (see Remark~\ref{rem:weights}). 
\end{proofof}

\section{Mixtures of two subspaces with signficant dimension difference}
\newcommand{\monom}[1]{\phi_\ell(#1)}

In this section, we prove Theorem~\ref{thm:dversusalphad} (restated below for convenience of the reader) which shows that there is a computationally efficient algorithm for learning a mixture of two subspaces with significantly different dimensions. Note that the following theorem does {\em not} assume that the two subspaces are incomparable.  

\restatedversusalphad*
The algorithm \textsc{recover-subspace-large-diff} is described in Figure~\ref{alg:constant_relative_dimension}. Before proving Theorem~\ref{thm:dversusalphad}, we will make some simplifying assumptions (with their justifications given below) followed by some useful notation. 
\chnote{
\begin{remark}
\WLOG, we can assume 
\begin{enumerate}
    \item $n=\dmax$. \newd{This is because we can first use Theorem~\ref{thm:test_comparability} to test whether the underlying subspaces are incomparable. If they are incomparable,  we can use Theorem~\ref{thm:incpr_rcv} to recover the subspaces. If not, we can take $O(n/\wnaught)$ samples from the mixture to get 
    $\sfspan(\Amax\cup \Amin)$ with high probability (see \chnote{\cref{lem:spanofunion}}). We can then construct a linear bijection, say $D$, between $\sfspan(\Amax\cup \Amin)$ and $\bbF^{\dmax}$. Applying the map $D$ to every sample from the mixture, we can now assume that $n=\dmax$.}
    \item The algorithm knows $\dmax,\dmin$. This is because we can enumerate all the possible values of $\dmax,\dmin$ and run the algorithm \alglargediff{} to get a list of candidate hypothesis. We can then use the hypothesis testing algorithm in Theorem~\ref{thm:hypotest:main} to identify the correct one with high probability.
    \item We set $\delta=0.1$. This is because we can always boost the success probability of our algorithm at a multiplicative cost of $\log(1/\delta)$.
    \item \chnote{$\dmax$ is at least a sufficiently large constant (which only depends on $\wnaught$). Otherwise, we can always apply a brute force algorithm to recover the subspaces.}
\end{enumerate}
\end{remark}

\noindent {\bf Notation.} 
\begin{enumerate}
    \item We will use $\monom{x} \in \bbF^{\binom{n}{\le \ell}}$ to represent the vector consisting of all the monomials of degree at most $\ell$ on $x$, including the constant term. As an example, when $\ell=2$ and $n=2$,  we have $\monom{x}=(1, x_1, x_2, x_1 x_2)$
-- note that because the underlying field is $\mathbb{F}_2$, all the monomials are multilinear. \chnote{We will use $\monom{A}$ to denote $\mathset{\monom{x}: x\in A}$. $\monom{A}$ is a set of vectors in $\bbF^{\binom{n}{\le\ell}}$.}
    \item We define $t:= \dmax - \dmin = (1-\alpha)\dmax$ to denote the difference between the dimensions of the underlying subspaces $\Amax$ and $\Amin$. 
    \item For a sequence of vector $x_1,x_2,\cdots,x_k$, we define $x_{-i}:=\mathset{x_j:j\neq i}$.
    \item Let us denote by $y_i:= \monom{x_i}$.
\end{enumerate}

Finally, we note that for any subspace $V$ of dimension $d$ over $\bbF$, $rank(\monom{V})=\binom{d}{\le\ell}$.

} 


\begin{algorithm2e}
\caption{\alglargediff}
\label{alg:constant_relative_dimension}
\KwIn{\\
$\dmax$ -- dimension of the larger subspace \\
$\alpha \le 1$ -- ratio of the dimensions of two subspaces \\
$\sampleo$ -- oracle for random samples from mixture of subspaces. \\
$\wnaught$ -- minimum of two mixture weights. \\
}\KwOut{two subspaces $U,V$.}
 Set $\ell=\frac{2\log(100/\wnaught)}{1-\alpha}$\; 
 Use $\sampleo$ to sample $m=\binom{\dmax}{\leq\ell}$ vectors $\bx_1, \bx_2, \cdots, \bx_m$\;
 Let S be the set of all $i\in[m]$ such that $\by_i \coloneqq \monom{\bx_i}$ can be expressed as linear combination of $\{ \monom{\bx_j}: j\neq i\}$\;
\Return{$U=\sfspan(\{\bx_i: i\in S\}),V=\sfspan(\{\bx_i: \bx_i\notin U\}) $}\;
\end{algorithm2e}
We start with the following crucial lemma from \cite{ben2012random} (stated below). An equivalent version was also proven in  \cite[Theorem 1.5]{keevash2005set}.
\begin{lemma}[Lemma 4, \cite{ben2012random}]\label{lem:behl2012lemma4}
Let $x_1,x_2,\cdots, x_R$ be $R=2^r$ distinct points in $\mathbb{F}_2^n$. Consider the linear space of degree $d$ polynomials restricted to these points; that is, the space
\begin{align*}
    \mathset{(p(x_1),\cdots,p(x_R)):p\in\sfRM(n,d)}.
\end{align*}
The linear dimension of this space is at least $\binom{r}{\le d}$.
\end{lemma}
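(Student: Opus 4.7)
The plan is to identify $\sfRM(n,d)$ with the space of multilinear polynomials of degree at most $d$ and then locate a large downward-closed family of monomials whose restrictions to $\{x_1,\dots,x_R\}$ are linearly independent; the bound will then follow from a Kruskal--Katona style count. Since $x^2 = x$ over $\bbF$, every $p \in \sfRM(n,d)$ agrees (as a function on $\bbF^n$) with a multilinear polynomial of degree at most $d$, and such polynomials are indexed by subsets $S \subseteq [n]$ with $|S| \le d$ via the monomials $M_S(x) := \prod_{i \in S} x_i$. Because $x_1,\dots,x_R$ are distinct, the full evaluation matrix $E$ with rows indexed by \emph{all} subsets $S \subseteq [n]$ and columns by the $R$ points (with $E[S,j] = M_S(x_j)$) has rank exactly $R$: any function on the point set is realized by a multilinear indicator polynomial.

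Next, I would fix the graded colex order on $2^{[n]}$: $A \prec B$ iff $|A| < |B|$, or $|A| = |B|$ and $\max(A \triangle B) \in B$. Running Gaussian elimination on the rows of $E$ in this order produces a set $\mathcal{F} \subseteq 2^{[n]}$ of \emph{pivotal} subsets, namely those $S$ for which the evaluation vector of $M_S$ does not lie in the $\bbF$-span of the evaluation vectors of $\{M_{S'} : S' \prec S\}$. Then $|\mathcal{F}| = \rank(E) = R$. The crucial step is to show $\mathcal{F}$ is downward closed: if $S \in \mathcal{F}$, $T \subsetneq S$, but $T \notin \mathcal{F}$, then on the points $M_T = \sum_{S' \prec T} a_{S'} M_{S'}$. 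Multiplying both sides by $M_{S \setminus T}$, a valid pointwise operation since $x_i^2 = x_i$, gives $M_S = \sum_{S' \prec T} a_{S'} M_{S' \cup (S \setminus T)}$ on the points. I will verify that every $S' \cup (S \setminus T)$ appearing on the right is strictly less than $S$ in graded colex: when $|S'| < |T|$ this is immediate from the graded part; when $|S'| = |T|$, reaching size $|S|$ forces $S' \cap (S \setminus T) = \emptyset$, and a direct symmetric-difference computation gives $(S' \cup (S \setminus T)) \triangle S = S' \triangle T$, so the maximum lands in $T \subseteq S$ and the colex comparison transfers. This expresses $M_S$ in the span of strictly earlier rows, contradicting $S \in \mathcal{F}$.

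Finally, since $\mathcal{F}$ is a downward-closed family of size $R = 2^r$, Kruskal--Katona (equivalently, Macaulay's bound) yields $|\{S \in \mathcal{F} : |S| \le d\}| \ge \binom{r}{\le d}$, the minimum being attained by the cube $2^{[r]}$. The corresponding evaluation vectors are linearly independent and all lie in the image of the evaluation map on $\sfRM(n,d)$, which establishes the claimed lower bound. I expect the main obstacle to be verifying the symmetric-difference identity and confirming that graded colex is precisely the ordering that makes the compression argument work: slight variants such as graded lex do not preserve the strict inequality in the boundary case $|S'| = |T|$, so the choice of ordering is not cosmetic but essential.
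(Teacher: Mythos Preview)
The paper does not give its own proof of this lemma; it is imported verbatim from \cite{ben2012random} (with \cite{keevash2005set} cited as an equivalent source) and then used as a black box to derive the corollary Lemma~\ref{lem:newbehl2012lemma4}. So there is no in-paper argument to compare against.

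Your argument is correct and is essentially the proof that appears in those references: build a monomial basis for the evaluation space greedily in graded colex, show the resulting family $\mathcal{F}$ of exponent sets is downward closed via the multiplication trick $M_T \cdot M_{S\setminus T} = M_S$, and finish with Kruskal--Katona. The symmetric-difference identity $(S'\cup(S\setminus T))\triangle S = S'\triangle T$ holds exactly under the two disjointness hypotheses you isolate, and graded colex (rather than graded lex) is indeed what makes the boundary case $|S'|=|T|$ work. One point worth spelling out in a final write-up: the Kruskal--Katona consequence you invoke is that a down-closed $\mathcal{F}$ of size $2^r$ has $|\mathcal{F}_{\le d}|\ge\binom{r}{\le d}$. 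This follows from the Lov\'asz form by a two-line case split on $f_d=|\mathcal{F}_d|$: if $f_d\ge\binom{r}{d}$ then iterated shadows give $f_i\ge\binom{r}{i}$ for all $i\le d$; if $f_d=\binom{x}{d}$ with $x<r$ then iterated upper shadows give $f_i\le\binom{x}{i}<\binom{r}{i}$ for $i>d$, so $|\mathcal{F}_{>d}|<\binom{r}{>d}$ and hence $|\mathcal{F}_{\le d}|=2^r-|\mathcal{F}_{>d}|>\binom{r}{\le d}$.
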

As an easy corollary, we have the following claim. 
\annote{Aidao, please use serif font for notations such as span and dim.}
\begin{lemma}
\label{lem:newbehl2012lemma4}
Let $x_1,x_2,\cdots, x_R$ be distinct points in $\mathbb{F}_2^n$. If $R\geq 2^r$, then $rank(\{\monom{x_1},\cdots,\monom{x_R}\})\ge \binom{r}{\leq\ell}$.
\end{lemma}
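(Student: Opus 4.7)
The plan is to reduce Lemma~\ref{lem:newbehl2012lemma4} to Lemma~\ref{lem:behl2012lemma4} via a standard duality between rows and columns of an evaluation matrix, together with a monotonicity argument for adding points.

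First, I would set up the duality. Arrange $\phi_\ell(x_1),\dots,\phi_\ell(x_R)$ as the rows of a matrix $\Phi \in \bbF^{R \times \binom{n}{\le\ell}}$, where the columns are indexed by monomials $m$ of degree at most $\ell$ and the $(i,m)$-entry is $m(x_i)$. The quantity we want to lower bound is the row rank of $\Phi$. By the equality of row and column rank, this equals the dimension of the column span of $\Phi$. Since every polynomial $p\in \sfRM(n,\ell)$ is an $\bbF$-linear combination of its monomials, the column span of $\Phi$ is exactly the evaluation space
\begin{equation*}
E_R := \bigl\{(p(x_1),\dots,p(x_R)) : p \in \sfRM(n,\ell)\bigr\} \subseteq \bbF^{R}.
\end{equation*}
Thus $\rank\bigl(\{\phi_\ell(x_1),\dots,\phi_\ell(x_R)\}\bigr) = \dim(E_R)$.

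Next, since $R \ge 2^r$, I would pick any subset $T \subseteq [R]$ of size exactly $2^r$ and consider the coordinate projection $\pi_T : \bbF^R \to \bbF^{2^r}$ that keeps the coordinates in $T$. This projection maps $E_R$ onto the corresponding evaluation space $E_T = \{(p(x_i))_{i\in T} : p\in \sfRM(n,\ell)\}$ of the $2^r$ distinct points $\{x_i : i\in T\}$. Because surjective linear maps cannot increase dimension, $\dim(E_R) \ge \dim(E_T)$. Applying Lemma~\ref{lem:behl2012lemma4} with $d = \ell$ to the $2^r$ distinct points indexed by $T$ gives $\dim(E_T) \ge \binom{r}{\le \ell}$, and chaining the inequalities yields the desired bound $\rank\bigl(\{\phi_\ell(x_1),\dots,\phi_\ell(x_R)\}\bigr) \ge \binom{r}{\le\ell}$.

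I do not anticipate a serious obstacle: the only non-trivial ingredient is Lemma~\ref{lem:behl2012lemma4} itself, which is invoked as a black box. The bookkeeping reduces to the standard row-rank/column-rank identity plus the observation that discarding evaluation coordinates is a surjection from $E_R$ onto $E_T$, so no delicate estimate is needed.
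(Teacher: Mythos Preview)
Your proposal is correct and takes essentially the same approach as the paper. The paper also reduces to Lemma~\ref{lem:behl2012lemma4} via the evaluation matrix: it assumes WLOG that $R=2^r$ (your projection argument), forms the matrix $B$ whose columns are the $\phi_\ell(x_i)$ (the transpose of your $\Phi$), and uses the factorization $A=CB$ (where $A$ is the full polynomial-evaluation matrix) to get $\rank(B)\ge\rank(A)$, which is exactly your row-rank/column-rank identification of $\rank(\Phi)$ with $\dim(E_R)$.
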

\begin{proof}
\WLOG, we can assume $R=2^r$, since having more points can only increase the rank. Let $t=|\sfRM(n,\ell)|$. Say $\sfRM(n,\ell)=\mathset{p_1,\cdots,p_t}$. Let $A\in\bbF^{t\times R}$ be defined as $A_{i,j}=p_i(x_j)$. Applying Lemma~\ref{lem:behl2012lemma4} with $d=\ell$, we know the row-rank of $A$ is at least $\binom{r}{\le\ell}$. Let $B\in\bbF^{\binom{n}{\le\ell}\times R}$ be the matrix whose $i$th column is $\monom{x_i}$. Since every polynomial is a linear combination of monomials, there exists $C\in\bbF^{t\times \binom{n}{\le\ell}}$ such that $A=CB$, hence $\rank(B)\ge \rank(A)\ge \binom{r}{\le\ell}$.
\end{proof}

\begin{proofof}{Theorem~\ref{thm:dversusalphad}}
Let $\Ifirst$ (resp. $\Isecond$) be the set of all $i$ such that $\bx_i$ was sampled from $\Amax$ (resp. $\Amin$). 
We now define the events  $\calE_1$, $\calE_2$, $\calE_3$ and $\calE_4$ as follows:
\begin{enumerate}
    \item $\calE_1$: $\forall i\in \Ifirst, \bfy_i\notin \sfspan(\mathset{\bfy_{-i}}\cup \monom{\Amin})$
    \item $\calE_2$: $|I_1|\ge 10\binom{\alpha\dmax}{\le\ell}$
    \item $\calE_3$: $\forall T\subseteq I_1$ such that $|T|\ge 0.9|I_1|$, we have $\sfspan(\{\bx_j\}_{j \in T})=A_1 $
    \item $\calE_4$: $\sfspan(\{\bx_j\}_{j \in I_0})=A_0$
\end{enumerate}
Assume $\calE_1,\calE_2,\calE_3,\calE_4$ holds. Note that whenever $\calE_1$ holds, it follows that $S$ (defined in line 3 of \alglargediff{}) is a subset of $I_1$.  We now show that $\Amin$ can be recovered from the span of the samples corresponding to $S$. Now, consider the set $\mathset{\monom{\bx_i}:i\in\Isecond\setminus S}$. By definition, the elements of this set are linearly independent (otherwise, they will belong in $S$). As $\dim(\spann(\monom{A_1})) \le \binom{\alpha\dmax}{\le\ell}$, it follows that $|\mathset{\monom{\bx_i}:i\in\Isecond\setminus S}|\le \binom{\alpha\dmax}{\le\ell}$. 
As $i\mapsto \monom{\bx_i}$ is a injection on $I_1\setminus S$
, it follows that $|\{i\in\Isecond\setminus S\}|\le \binom{\alpha\dmax}{\le\ell}$. Since $\calE_2$ holds, $|I_1\setminus S|\le 0.1 |I_1|$, hence $|S|\ge0.9 |I_1|$. Since $\calE_3$ holds, $\spann\penclose{\mathset{\bx_j}_{j\in S}}=\Amin$. 

We now argue that the algorithm also recovers $\Amax$.  We claim $\mathset{j\in[m]:\bx_j\notin A_1}=I_0$. Fix $j\in I_0$\cadnote{. S}ince $\calE_1$ holds, $\monom{\bx_j}=\by_j\notin \monom{A_1}$, then $\bx_j\notin A_1$. Hence $I_0\subseteq\mathset{j:\bx_j\notin A_1}$. It is not hard to see $\mathset{j:\bx_j\notin A_1}\subseteq I_0$. 
Finally when $\calE_4$ holds, we have $\spann(\mathset{\bx_j:\bx_j\notin A_1})=\spann(\mathset{\bx_j:j\in I_0})=A_0$.



Thus, it remains to show that $\calE_1$, $\calE_2$, $\calE_3$ and $\calE_4$ hold simultaneously with probability $0.99$. 
~\\
{\bf Proof of $\bbP[\calE_1]\ge 0.999$:} First, observe that by definition, $\ell=\frac{2\log(100/\wnaught)}{1-\alpha}$. Using the assumption on $\dmax$ and $\wnaught$, it follows that 
\begin{equation} \label{eq:fun1} 
\ell=\frac{2\log(100/\wnaught)}{1-\alpha}  = O \left( \frac{\sqrt{\dmax}}{\log\dmax} \right); \quad  \dmax \ge \frac{2\ell}{(1-\alpha)}. 
\end{equation} 
From this, applying the constraints on $\dmax$ and $\ell$ from \eqref{eq:fun1}, we get 
\begin{equation}\label{eq:fun2} 
        \left(\frac{\wnaught}{100}\right)^{1/\ell}\ge 1+\frac{1}{\ell}\cdot\log\left(\frac{\wnaught}{100}\right)\ge \frac{(1+\alpha)}{2}\ge \alpha+\frac{\ell}{\dmax}.
\end{equation}
Now, it is not difficult to see that $\binom{\alpha \dmax}{\le \ell} \le \binom{\alpha \dmax + \ell}{\ell}$ --  it easily follows from the combinatorial interpretation of binomial coefficients. Now, using this and \eqref{eq:fun2}, we get 
\begin{equation}\label{eq:tensordimratio} 
     \frac{\binom{\alpha \dmax}{\le\ell}}{\binom{\dmax}{\le\ell}}\le\frac{\binom{\alpha \dmax+\ell}{\ell}}{\binom{\dmax}{\ell}}\le \bigg(\alpha +\frac{\ell}{\dmax} \bigg)^\ell\le\frac{\wnaught}{100}.
\end{equation}
We now have, 
\begin{align}
    & \mathbb{P}\Big[\dim(\sfspan(\mathset{\bfy_{-i}}\cup \monom{\Amin}))\leq (1-0.4\wnaught)\binom{\dmax}{\leq \ell} \Big] && \\
    \geq & \mathbb{P}\Big[\dim(\sfspan(\mathset{\bfy_{-i}}\cup \monom{\Amin})) \leq (1-0.5\wnaught)\binom{\dmax}{\leq \ell}+\binom{\alpha \dmax}{\leq \ell} \Big] \nonumber\\
    &\qqquad \text{using \eqref{eq:tensordimratio},}&& \nonumber \\
    \ge &  \mathbb{P}\Big[\dim(\sfspan(\mathset{\bfy_{-i}})) \leq (1-0.5\wnaught)\binom{\dmax}{\leq \ell} \Big] \nonumber\\
    &\qqquad \text{using $\dim(\sfspan(\monom{\Amin})) = \binom{\alpha \dmax}{\leq \ell}$,}&& \nonumber \\
    \ge & \bbP[|I_0|\le (1-0.5\wnaught)\binom{\dmax}{\leq \ell}]\nonumber\\ 
    &\qqquad \text{using $|I_0| \ge |\mathset{\bfy_{-i}}| \ge \dim(\sfspan(\mathset{\bfy_{-i}}))$,}&& \nonumber\\
    \ge & 1-e^{-\frac{\wnaught^2}{24}\binom{\dmax}{\leq\ell}}\label{eq:fun4}  \\
    &\qqquad \text{from a standard Chernoff bound}.\nonumber&& 
\end{align}
Let us now define the event $\mathcal{B}_i$ as the event that $i \in I_0$ and $\dim(\sfspan(\mathset{\bfy_{-i}}\cup \monom{\Amin}))\leq (1-0.4\wnaught)\binom{\dmax}{\leq \ell} $. Let $r \coloneqq \lceil (1-0.4\wnaught/\ell)\dmax+\ell \rceil$. Using  reasoning  similar to  \eqref{eq:tensordimratio}, we have 
\[
    \frac{\binom{r}{\le \ell}}{\binom{\dmax}{\le\ell}}\ge \frac{\binom{r}{\ell}}{\binom{\dmax+\ell}{\ell}}\ge \left(\frac{r-\ell}{\dmax}\right)^\ell
    \ge \bigg(1-\frac{0.4\wnaught}{\ell}\bigg)^\ell
    \ge 1-0.4\wnaught.
\]
Thus, it follows that if the event $\mathcal{B}_i$ holds, $\dim(\sfspan(\mathset{\bfy_{-i}}\cup \monom{\Amin})) \le \binom{r}{\le \ell}$.
Now, let us define the set $\mathcal{H}_i = \{x \in \bbF^{\dmax}: \monom{x} \in \sfspan(\mathset{\bfy_{-i}}\cup \monom{\Amin})\}$. By Lemma~\ref{lem:newbehl2012lemma4},   we get that $|\mathcal{H}_i| \le 2^{r+1}$. 
Thus, we now have 
\begin{align}
\mathbb{P}[\bfy_i\in \sfspan(\mathset{\bfy_{-i}}\cup \monom{\Amin})|\mathcal{B}_i] = \frac{|\mathcal{H}_i|}{2^{\dmax}}  \le 
\frac{2^{r+1}}{2^{\dmax}} \le 2^{-\frac{0.35\wnaught \dmax}{\ell}}.
\end{align}
Applying the above inequality along with \eqref{eq:fun4}, we get 
\begin{align}
\mathbb{P}[\bfy_i\notin \sfspan(\mathset{\bfy_{-i}}\cup \monom{\Amin})|i\in I_0]\ge 1- 2^{\frac{-0.35\wnaught \dmax}{\ell}}-e^{-\frac{\wnaught^2}{24}\binom{\dmax}{\leq\ell}}\ge 1-2^{\frac{-0.3\wnaught \dmax}{\ell}}.     \label{eq:fun6}
\end{align} 
By taking a union bound, it follows that 
\begin{align}
    \mathbb{P}[\forall i\in \Ifirst, \bfy_i\notin \sfspan(\mathset{\bfy_{-i}}\cup \monom{\Amin})]\ge 1- \binom{\dmax}{\leq\ell}2^{\frac{-0.3\wnaught \dmax}{\ell}}\ge 1- 2^{\frac{-0.2\wnaught \dmax}{\ell}}.\label{eq:fun7}
\end{align}
As we have chosen $\dmax$ to be sufficiently large, the right hand side is at least $0.999$ showing that $\bbP[\calE_1]\ge 0.999$. 
\newline
\newline
{\bf Proof of $\bbP[\calE_2]\ge 0.999$:} This follows from  a straightforward Chernoff bound on the sampling process defining  $I_1$. 
\newline
\newline
{\bf Proof of $\bbP[\calE_3]\ge 0.999$:} This is a direct application of \cref{lem:spanall}. 
\newline
\newline 
{\bf Proof of $\bbP[\calE_4]\ge 0.999$:} This also follows from \cref{lem:spanall}. 



\end{proofof}

\section{Reduction from Learning Noisy Parities} \label{sec:paritieslb}
\anote{9/29: finished editing this section. Aidao, please go over it once, after you're done with Section 6.  }

In this section, we show how the problem of learning a mixture of two (comparable) subspaces captures the notorious hard problem of  {\em learning parity with noise} (LPN).

\newa{
Given $n \in \mathbb{N}$, the $(n,\epsilon)$-{\sc LPN} problem is instantiated by an (unknown) parity function $f: \bbF^n \to \bbF$ and a noise parameter $\epsilon \in (0,1/2)$. 
The samples are generated i.i.d. by a sampling oracle $\mathcal{O}=\mathcal{O}(f,\epsilon)$ as follows. 
First,  $\bfx\sim_u\bbF^n$ is sampled uniformly at random from $\bbF^n$. Then $\mathbf{b}\in\mathset{0,1}$ is sampled such that  $\bbP[\mathbf{b}=0]=1-\epsilon$ and $\bbP[\mathbf{b}=1]=\epsilon$. If $\mathbf{b}=0$, $\mathcal{O}$ outputs $(\bfx,f(\bfx))$ and if $\mathbf{b}=1$, outputs $(\bfx,1-f(\bfx))$. Given samples generated i.i.d. by the sampling oracle $\mathcal{O}(f,\epsilon)$, the goal is to learn the unknown parity function $f$.}

The following simple proposition reduces {\sc LPN} to learning mixtures of (comparable) subspaces in $\bbF^{n+1}$, where the subspaces have dimensions $n+1$ and $n$ respectively. 
\begin{proposition}\label{prop:reduce_LPN_to_mix}
Suppose there exists an algorithm {\sc ALG} that given samples from a mixture of two subspaces $\Amax=\bbF^{n+1}, \Amin\subseteq \bbF^{n+1}$ of dimensions $n+1,n$ respectively, with mixing weights $2\epsilon,1-2\epsilon$, runs in time $T=T(n,\delta)$ and solves this problem with probability $1-\delta$. Then there is an algorithm that solves $(n,\epsilon)$-{\sc LPN} with probability $1-\delta$ and running time $O(T)+\poly(n)$.
\end{proposition}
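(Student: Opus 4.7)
The plan is to construct an explicit reduction that converts each sample from the LPN oracle $\mathcal{O}(f,\epsilon)$ into a sample from the target mixture, and then use $\textsc{ALG}$ as a black box. Let $f: \bbF^n \to \bbF$ be the unknown parity, and define the subspaces $\Amax = \bbF^{n+1}$ and $\Amin = \{(x, f(x)) : x \in \bbF^n\} \subseteq \bbF^{n+1}$. Since $f$ is linear, $\Amin$ is a genuine subspace of dimension $n$, and trivially $\Amin \subsetneq \Amax$, so the two subspaces are comparable and have the dimensions required by the hypothesis. The reduction itself will be the identity map: pass each LPN sample $(\bfx, y) \in \bbF^{n+1}$ directly to $\textsc{ALG}$.

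The key step is to verify that the distribution of a single LPN sample is identical to the mixture distribution over $\Amax, \Amin$ with weights $w_0 = 2\epsilon$ and $w_1 = 1 - 2\epsilon$. For an LPN sample, the pair $(\bfx, f(\bfx))$ occurs with probability $(1-\epsilon)/2^n$ and the pair $(\bfx, 1 - f(\bfx))$ with probability $\epsilon/2^n$. For the mixture distribution with the stated weights, a point in $\Amin$ has probability
\[
\frac{2\epsilon}{2^{n+1}} + \frac{1-2\epsilon}{2^n} = \frac{1-\epsilon}{2^n},
\]
and a point in $\bbF^{n+1} \setminus \Amin$ has probability $2\epsilon/2^{n+1} = \epsilon/2^n$. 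These two distributions match pointwise on $\bbF^{n+1}$, so LPN samples are indistinguishable from mixture samples with the desired parameters and mixing weights. Both parameters satisfy the lower-bound assumption provided we take $\min\{w_0, w_1\} = \min\{2\epsilon, 1-2\epsilon\}$, which is bounded away from $0$ since $\epsilon \in (0,1/2)$.

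Once this equivalence is established, I would feed the LPN samples into $\textsc{ALG}$, which with probability $1-\delta$ returns the pair of subspaces $\{\Amax, \Amin\}$. Since $\Amax = \bbF^{n+1}$ is trivially identifiable (it is the unique $(n+1)$-dimensional subspace), the other output must be the graph $\Amin$ of $f$. To extract the parity $f$ itself, I would compute any basis of $\Amin$, write it as the row span of an $n \times (n+1)$ matrix, and read off the (unique) linear form whose kernel equals $\Amin$; equivalently, solve for $s \in \bbF^n$ such that every basis element $(b, c)$ satisfies $c = \langle s, b \rangle$, which is a single linear system of size $\poly(n)$.

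The running time is $T(n,\delta)$ for the call to $\textsc{ALG}$ plus $\poly(n)$ for generating the basis description of $\Amin$ and solving for $s$, matching the claim. There is no real obstacle in this reduction — the only delicate point is the weight calculation showing that the LPN noise distribution exactly reproduces the mixture of $\Amax = \bbF^{n+1}$ and $\Amin$, and this is a direct computation. The success probability of the reduction is the success probability of $\textsc{ALG}$, namely $1-\delta$, since the remaining steps are deterministic and exact.
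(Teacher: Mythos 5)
Your proposal is correct and takes essentially the same approach as the paper: define $\Amin$ as the graph of $f$ in $\bbF^{n+1}$, observe that the LPN sample distribution coincides pointwise with the mixture of $\bbF^{n+1}$ and $\Amin$ under weights $2\epsilon$ and $1-2\epsilon$, run $\textsc{ALG}$ as a black box, and read off $f$ from the recovered $\Amin$ by linear algebra. Your version is slightly more explicit than the paper's in spelling out the per-point probability computation, but the argument is the same.
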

\begin{proof}
\newa{Consider a sample $(\bx,\by)\in\bbF^{n+1}$ (with $\bx \in \bbF^n$) drawn from a sampling oracle $\mathcal{O}(f,\epsilon)$ for the $(n,\epsilon)$-{\sc LPN} problem.} 
We can view $(\bx, \by)$ as a sample from a mixture of two subspace $\bbF^{n+1}, \Amin\subseteq \bbF^{n+1}$ of dimension $n+1,n$ (respectively) with mixing weights $2\epsilon, (1-2\epsilon)$ as follows. 
\newa{Let $\Amin$ be the subspace of dimension $n$ defined by the linear equation $f(\bx)+\by=0$ over $\bbF$. On the one hand, if $\bfb=1$, then $(\bx,\by) \in \bbF^{n+1}$ does not belong to $\Amin$; it is drawn from $\Amax \setminus \Amin$.  On the other hand when $\bfb=0$, $(\bx,\by) \in \bbF^{n+1}$ lies in the subspace $\Amin$. But this could correspond to a sample drawn from $\Amin$ or to the portion of $\Amax$ that overlaps with $\Amin$ (recall that $\Amin \subset \Amax$ and $|\Amax \cap \Amin|=|\Amax|/2$ in our case). Hence by setting the mixing weights of the subspaces $\Amax=\bbF^{n+1}, \Amin$ to be $2\epsilon, 1-2\epsilon$ respectively, we can view a sample $(\bx,\by)$ drawn from the {\sc LPN} problem as being drawn from the mixture of subspaces $\Amax, \Amin$. 

Our goal is then to recover $\Amax, \Amin$ from i.i.d. samples of the form $(\bx,\by)$ drawn from the {\sc LPN} problem.  If the algorithm {\sc ALG} succeeds in finding $\Amin$, then this provides a parity function $f$ (corresponding to the constraint defining $\Amin$) that satisfies the LPN problem.}
\end{proof}

The next proposition shows that learning mixtures of two subspaces $\Amax, \Amin$ in $\bbF^{n+1}$ where $\Amax=\bbF^{n+1}$ and $\dim(\Amin)=n$ is in fact equivalent to the {\sc LPN} problem. 

\begin{proposition}
\label{prop:reduce_mix_to_LPN}
Suppose there is an algorithm $ALG$ that solves $(n,\epsilon)$-LPN with probability $1-\delta$ and running time $T=T(n,\delta)$. Then, there is an algorithm that given samples from a mixture of two subspaces $\bbF^{n+1}, \Amin\subseteq \bbF^{n+1}$ of dimension $n+1,n$ respectively with mixing weights $2\epsilon,1-2\epsilon$, runs in time $O(nT)+\poly(n)$ and recovers $\Amin$ with probability $1-\delta-\exp(-n)$. 
\end{proposition}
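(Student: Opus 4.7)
The plan is to reduce the subspace-recovery instance to $n+1$ copies of $(n,\epsilon)$-{\sc LPN}, plus a cheap verification step to pick the correct output. Observe first that $\Amin$ is a hyperplane in $\bbF^{n+1}$, i.e., the kernel of some unknown nonzero linear functional $\langle a,\cdot\rangle$ with $a\in\bbF^{n+1}\setminus\{0\}$, so recovering $\Amin$ is the same as recovering $a$ up to the (nonexistent, over $\bbF$) nonzero scaling. Since $a\neq 0$, at least one coordinate $i\in[n+1]$ satisfies $a_i=1$.

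First I would iterate over each guess $i\in[n+1]$ of such a coordinate. Reindex so $i=n+1$ and write each sample from the mixture as $(\bx,\by)\in \bbF^n\times\bbF$. If indeed $a_{n+1}=1$, then $\Amin=\{(\bx,f(\bx)):\bx\in\bbF^n\}$ where $f(\bx)=\sum_{j\le n}a_j\bx_j$, and a direct computation shows:  $\bx$ is uniform over $\bbF^n$ in both mixture components, and conditional on $\bx$,
\[
\Pr[\by=f(\bx)\mid \bx]=(1-2\epsilon)\cdot 1+2\epsilon\cdot\tfrac{1}{2}=1-\epsilon.
\]
So $(\bx,\by)$ is exactly a valid $(n,\epsilon)$-{\sc LPN} sample for the parity $f$. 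Feeding these to {\sc ALG} returns a candidate parity $\hat f_i$ (equivalently a candidate hyperplane $\hat{\Amin}^{(i)}$) with probability $\ge 1-\delta$, in time $T(n,\delta)$.

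The main obstacle is that for the "wrong" guesses $i$ (those with $a_i=0$) the samples we hand {\sc ALG} are \emph{not} a valid LPN instance: after reindexing, $\by$ becomes uniform on $\bbF$ and independent of $\bx$ under both mixture components, so {\sc ALG} may output an arbitrary garbage parity, and we have no guarantee on $\hat f_i$ in that case. To cope, I would add an explicit verification step. Draw $N=\Theta(n/(1-2\epsilon)^2)$ fresh samples $z$ from the mixture, and for each candidate $\hat a^{(i)}$ estimate $p(\hat a^{(i)}):=\Pr_z[\langle \hat a^{(i)},z\rangle=0]$. A short calculation gives $p(a)=1-\epsilon$, while for any $\hat a$ defining a hyperplane different from $\Amin$ one has $p(\hat a)=1/2$: in the uniform-on-$\bbF^{n+1}$ component the probability is $1/2$, and in the uniform-on-$\Amin$ component the map $z\mapsto \langle \hat a,z\rangle$ restricted to $\Amin$ is either identically $0$ (iff $\hat a=a$) or a nontrivial linear functional on $\Amin$ (hence zero with probability $1/2$). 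A Chernoff bound shows each estimate is within $(1/4-\epsilon/2)$ of its mean except with probability $\exp(-\Omega(n))$, so the test cleanly separates the correct candidate ($1-\epsilon$) from every incorrect one ($1/2$).

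Finally, the algorithm outputs the $\hat{\Amin}^{(i)}$ that passes the test. For correctness, fix any $i^*$ with $a_{i^*}=1$; {\sc ALG} returns the true $a$ on trial $i^*$ with probability $\ge 1-\delta$, and a union bound over the $n+1$ candidates makes every test verdict correct except with probability $(n+1)\exp(-\Omega(n))=\exp(-\Omega(n))$. The running time is $(n+1)T(n,\delta)+\poly(n)=O(nT)+\poly(n)$, and the total failure probability is $\delta+\exp(-n)$, matching the proposition. The only conceptual step is the verification; everything else is a direct unpacking of the mixture's structure in the special case $\dim(\Amin)=n$.
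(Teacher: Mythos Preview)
Your proof is correct and follows essentially the same approach as the paper: guess which coordinate lies in the support of the defining vector $a$, reduce each guess to an $(n,\epsilon)$-{\sc LPN} instance, and then select the right candidate among the $n+1$ outputs. The only difference is in the selection step: the paper invokes the general hypothesis-testing routine (Proposition~\ref{prop:gnrl_hypo_test}) to pick out the correct hyperplane, whereas you give a more elementary, self-contained test based on estimating $\Pr_z[\langle\hat a,z\rangle=0]$ and separating $1-\epsilon$ from $1/2$; both work for the same reason and have the same dependence on $\epsilon$.
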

\begin{proof}
\newa{ We start with a simple observation. Suppose  (*)  $x_{i_1}+x_{i_2}+\cdots+x_{i_k}=0$  be the constraint defining subspace $\Amin$, and suppose $j\in\mathset{i_1,i_2,\cdots,i_k}$. Consider the parity 
$$f:\bbF^{\mathset{1,2,\dots,n+1}\setminus \mathset{j}} \to \bbF, \text{ where } f(x)=\sum_{\ell \in \mathset{i_1,i_2, \dots, i_k} \setminus \mathset{j}} x_\ell .$$
On one hand, if $(\bfx_1,\dots,\bfx_{n+1})$ is drawn from $\Amin$ (this is with probability $1-2\epsilon$), then the pair $(\bfx_{-j},\bfx_j)$ satisfies the parity $f$ by definition of $\Amin$. On the other hand, if $(\bfx_1,\dots,\bfx_{n+1})$ is drawn from $\Amax$ (this is with probability $2\epsilon$), it satisfies parity $f$ with probability $1/2$. In total, the parity $f$ is satisfied with probability $1-2\epsilon+ \tfrac{1}{2} (2\epsilon) = 1-\epsilon$. Hence, a sample  $(\bfx_1,\dots,\bfx_{n+1})$ from the mixture of subspaces with weights $2\epsilon, 1-\epsilon$, $(\bfx_{-j},\bfx_j)$ can be viewed as a sample of $(n,\epsilon)$-{\sc LPN} with unknown parity $f$. 

We do not know $\mathset{i_1, i_2, \dots, i_k}$. However we can guess and try out $j=1,\cdots, j=n+1$ and get at most $n+1$ candidate hypothesises. We can then use the well known hypothesis testing result from Proposition~\ref{prop:gnrl_hypo_test} to filter and find the correct subspace $\Amin$ with high probability.}
\end{proof}

\acks{We thank Swastik Kopparty for telling us about the results in \cite{ben2012random}.}

\bibliography{aravind}

\appendix

\section{Hypothesis Test}
\label{apd:uniquedistribution}

In this section we will prove the following theorem.
\hypotest*
We defer the proof to the end of this section.

In order to prove Theorem~\ref{thm:hypotest:main}, we need a fundamental tool from statistics, namely “hypothesis testing for distributions”. There are many equivalent forms of this algorithm — we use the following (convenient) version from \cite{de2014learning}.
\begin{proposition}[Simplified {\cite[Proposition 6]{de2014learning}}]\label{prop:gnrl_hypo_test}
Let $\bD$ be a distribution over $W$ and $\bD_\epsilon=\mathset{\bD_j}_{j=1}^N$ be a collection of $N$ distribution over $W$ with the property that there exists $i\in[N]$ such that $d_{TV}(\bD,\bD_i)\le\epsilon$. There is an algorithm $T^D$ which is given an accuracy parameter $\epsilon$, a confidence parameter $\delta$, and is provided with access to (i) samplers for $\bD$ and $\bD_k$, for all $k\in[N]$ (ii) a evaluation oracle $EVAL_{\bD_k}$, for all $k\in[N]$, which, on input $w\in W$, output the value $\bD_k(w)$. This algorithm has the following behavior: It makes $m=O((1/\epsilon^2)(\log N+\log(1/\delta)))$ draws from $\bD$ and each $\bD_k,k\in[N]$, and $O(m)$ calls to each oracle $EVAL_{\bD_k},k\in[N]$, performs $O(mN^2)$ arithmetic operations, and with probability $1-\delta$ outputs an index $i^*\in[N]$ that satisfies $d_{TV}(\bD,\bD_{i^*})\le 6\epsilon$.
\end{proposition}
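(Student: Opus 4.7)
I would prove Proposition~\ref{prop:gnrl_hypo_test} by the classical Scheff\'e tournament argument, which is short enough to include here even though the paper cites the result. For each ordered pair $(j,k)$ with $j \ne k$, define the \emph{Scheff\'e set}
\begin{equation*}
S_{jk} := \{w \in W : \bD_j(w) > \bD_k(w)\},
\end{equation*}
whose indicator can be evaluated at any point using two calls to $EVAL_{\bD_j}$ and $EVAL_{\bD_k}$. The fundamental identity, which is what makes Scheff\'e sets the ``right'' distinguisher, is
\begin{equation*}
\bD_j(S_{jk}) - \bD_k(S_{jk}) = d_{TV}(\bD_j, \bD_k).
\end{equation*}

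The first step is empirical estimation. I would draw $m = O((1/\epsilon^2)(\log N + \log(1/\delta)))$ i.i.d.\ samples from $\bD$ and from each $\bD_k$, and for every pair $(j,k)$ compute the empirical averages $\hat p_{jk}$ and $\hat q^{(j)}_{jk}$ of $\bD(S_{jk})$ and $\bD_j(S_{jk})$ respectively, by classifying each sample via the two EVAL oracles. By Hoeffding's inequality together with a union bound over the $O(N^2)$ estimates, with probability at least $1-\delta$ every estimate is within $\epsilon/2$ of its mean; the choice of $m$ is dictated precisely by this union bound, which is where both the $\log N$ and the $\log(1/\delta)$ enter.

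The second step is aggregation. For each $j \in [N]$, form the single score $\mathrm{score}(j) := \max_{k \ne j} |\hat p_{jk} - \hat q^{(j)}_{jk}|$ and output $i^* := \arg\min_j \mathrm{score}(j)$; computing all scores costs $O(mN^2)$ arithmetic operations. Correctness has two short pieces. Condition on the high-probability event of the estimation step and let $i$ be the promised close index, so $d_{TV}(\bD, \bD_i) \le \epsilon$. Then for every $k$, $|\bD(S_{ik}) - \bD_i(S_{ik})| \le d_{TV}(\bD, \bD_i) \le \epsilon$, so $|\hat p_{ik} - \hat q^{(i)}_{ik}| \le \epsilon/2 + \epsilon + \epsilon/2 = 2\epsilon$, which gives $\mathrm{score}(i) \le 2\epsilon$. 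In the other direction, for any $j$ with $d_{TV}(\bD, \bD_j) > 6\epsilon$, the Scheff\'e identity applied to $S_{j,i}$ together with the triangle inequality yields $\bD_j(S_{j,i}) - \bD(S_{j,i}) \ge d_{TV}(\bD_j, \bD_i) - d_{TV}(\bD, \bD_i) \ge 5\epsilon - \epsilon = 4\epsilon$, so $|\hat p_{j,i} - \hat q^{(j)}_{j,i}| \ge 4\epsilon - \epsilon = 3\epsilon$ and hence $\mathrm{score}(j) \ge 3\epsilon > 2\epsilon \ge \mathrm{score}(i)$. The $\arg\min$ rule therefore cannot select any such bad $j$, and $d_{TV}(\bD, \bD_{i^*}) \le 6\epsilon$ follows.

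\textbf{Main obstacle.} The only substantive ingredient is noticing the Scheff\'e identity and calibrating the empirical accuracy (here $\epsilon/2$) so that the intrinsic $4\epsilon$ true-value gap between good and bad hypotheses survives the estimation noise. Everything else is routine: Hoeffding, a union bound, and an $\arg\min$. The slack ``$6\epsilon$'' in the conclusion (rather than the tighter $3\epsilon$ of one-shot Scheff\'e testing) is exactly what makes the calibration painless and lets the aggregation be a one-line $\arg\min$ rather than a more delicate pairwise tournament. A minor bookkeeping point is ensuring that the EVAL-oracle queries are shared across the pairs so that the arithmetic count stays at $O(mN^2)$ as claimed.
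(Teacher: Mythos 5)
The paper does not actually prove this proposition; it is imported verbatim as a black box from \cite{de2014learning} (Proposition~6 there), so there is no in-paper argument to compare against. Your Scheff\'e-tournament proof is correct and is, in substance, the standard argument used to establish results of this type, including the cited one: the Scheff\'e identity $\bD_j(S_{jk}) - \bD_k(S_{jk}) = d_{TV}(\bD_j,\bD_k)$, Hoeffding with a union bound over the $O(N^2)$ pairs (which is exactly where the $\log N + \log(1/\delta)$ in $m$ comes from), and a maximin-style selection rule with a $2\epsilon$-vs-$3\epsilon$ separation yielding the $6\epsilon$ guarantee. The calculation $\mathrm{score}(i) \le \epsilon/2 + \epsilon + \epsilon/2 = 2\epsilon$ and $\mathrm{score}(j) \ge 4\epsilon - \epsilon = 3\epsilon$ for any $j$ with $d_{TV}(\bD,\bD_j)>6\epsilon$ is sound.

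One bookkeeping point to be aware of: with the caching you describe, each of the $m(N+1)$ drawn samples $w$ gets evaluated by all $N$ oracles, which is $\Theta(mN)$ calls to each $EVAL_{\bD_k}$, not the $O(m)$ per oracle claimed in the statement. This is a polynomial factor that doesn't affect any of the paper's downstream uses (where the overall runtime bound is $\poly(N,1/w_0)\log(1/\delta)$ anyway), but if you want to match the stated $O(m)$-per-oracle count you would need to avoid re-evaluating every hypothesis on every sample from every $\bD_j$ — for instance by estimating $\bD_j(S_{jk})$ only for the pairs actually needed in a round-robin tournament rather than for all $(j,k)$, or by accepting the coarser $O(mN)$-per-oracle figure, which the original reference also essentially does once all pairwise Scheff\'e sets are involved. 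This is a constant-in-spirit discrepancy, not a gap in correctness.
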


\begin{restatable}{definition}{mixdistribution}
$\bD(A,B,w_A,1-w_A)$ is defined as the distribution induced by a mixture of two incomparable subspaces $A, B \subseteq \bbF^n$ of dimension at most $d$ with mixing weights $w_A, 1-w_A$.
\end{restatable}

\begin{lemma}\label{lem:hypo_dtv_seperation}
Let $A,B,C,D$ be 4 subspaces of $\bbF^n$. Suppose $\mathset{A,B}\neq \mathset{C,D}$. Let $\bD_1=\bD(A,B,w_A,1-w_A), \bD_2=\bD(C,D,w_C,1-w_C), w^*=min(w_A,1-w_A,w_C,1-w_C)$. Then $d_{TV}(\bD_1,\bD_2)\ge w^*/8$.
\end{lemma}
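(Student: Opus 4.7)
The strategy is to show that the condition $\{A,B\} \neq \{C,D\}$ forces the \emph{supports} $A \cup B$ and $C \cup D$ to disagree, and then to lift this support disagreement into a total variation gap using \cref{lem:cannotspanall}.

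\medskip

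First, I plan to establish the following structural fact: \emph{if $A,B$ are incomparable subspaces, $C,D$ are incomparable subspaces, and $A \cup B = C \cup D$, then $\{A,B\} = \{C,D\}$.} Assuming $A\cup B = C\cup D$, intersecting with $A$ gives $A = (A\cap C) \cup (A\cap D)$. By \cref{lem:cannotspanall} (which implies in particular that a subspace is never the union of two of its proper subspaces, since otherwise $|A \setminus (U\cup V)| \ge 2^{\dim(A)-2} > 0$ would be violated), we conclude $A \subseteq C$ or $A \subseteq D$. Without loss of generality $A \subseteq C$. By the same argument applied to $B = (B\cap C) \cup (B\cap D)$, either $B \subseteq C$ or $B \subseteq D$. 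If $B\subseteq C$ then $A\cup B \subseteq C$, forcing $D \subseteq C$, which contradicts incomparability of $C,D$; hence $B \subseteq D$. Repeating the argument in the opposite direction yields $C \subseteq A$ or $C \subseteq B$; the latter combined with $A \subseteq C$ would contradict incomparability of $A,B$, so $C = A$, and symmetrically $D = B$.

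\medskip

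Taking the contrapositive, since $\{A,B\} \neq \{C,D\}$, we have $A \cup B \neq C \cup D$. After swapping $(\bD_1,\bD_2)$ if necessary, I may assume there is an element witnessing $(A\cup B) \setminus (C \cup D) \neq \emptyset$, and after swapping $A,B$ if needed, this element lies in $A$. Therefore $A \not\subseteq C \cup D$, which forces both $A \cap C \subsetneq A$ and $A \cap D \subsetneq A$ (if $A\subseteq C$ or $A\subseteq D$ held, then $A \subseteq C\cup D$). Now I apply \cref{lem:cannotspanall} with $S = A$, $U = A \cap C$, $V = A \cap D$, obtaining
\[
    |A \setminus (C\cup D)| \;=\; |A \setminus ((A\cap C)\cup(A\cap D))| \;\ge\; 2^{\dim(A)-2} \;=\; |A|/4.
\]

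\medskip

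To conclude, let $E := A \setminus (C \cup D)$. Then $E$ is disjoint from $\mathrm{supp}(\bD_2) \subseteq C\cup D$, so $\bD_2(E) = 0$, while
\[
    \bD_1(E) \;\ge\; w_A \cdot \frac{|A \setminus (C\cup D)|}{|A|} \;\ge\; \frac{w_A}{4} \;\ge\; \frac{w^*}{4}.
\]
Hence $d_{TV}(\bD_1, \bD_2) \ge |\bD_1(E) - \bD_2(E)| \ge w^*/4 \ge w^*/8$. (The factor of $8$ in the statement leaves room; our argument actually gives $w^*/4$.) The main (only) nontrivial step is the structural fact in the first paragraph; once we know the supports differ, the total variation bound is immediate from \cref{lem:cannotspanall}.
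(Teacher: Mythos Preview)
Your proof is correct and takes a genuinely different route from the paper. You first prove a clean structural fact---two incomparable pairs of subspaces with the same union must coincide as sets---and then reduce everything to a single application of \cref{lem:cannotspanall} on the set $A\setminus(C\cup D)$, obtaining the stronger bound $w^*/4$. The paper instead fixes $A$ to be the subspace of largest dimension among the four and runs a five-way case analysis on the possible equality/containment relations among $A,B,C,D$; several of its cases (e.g., $A=B$, or $B\subsetneq A$ and $D\subsetneq A$) actually treat \emph{comparable} pairs, so the paper is proving a strictly more general statement than what the definition of $\bD(\cdot,\cdot,\cdot,\cdot)$ (which bakes in incomparability) would require. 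Your argument is shorter and more conceptual, and is fully valid for the lemma as stated; the paper's extra generality may be intended as slack for the downstream hypothesis-testing application in \thref{thm:hypotest:main}, where candidate pairs $(A_j,B_j)$ are not explicitly assumed incomparable. If you wanted to match that generality, a one-line addendum suffices: when exactly one pair is comparable, $A\cup B$ is not a subspace while $C\cup D$ is, so the supports already differ; when both pairs are comparable, the supports are single subspaces and either they differ (done as before) or they coincide, in which case one subsumed subspace (say $B$) is not in $\{C,D\}$ and the event $B\setminus D$ (or its complement in $A$) separates the two mixtures---this is essentially the paper's Case~5.
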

\begin{proof}
\WLOG, assume $A$ has largest dimension among all 4 subspaces. We divide the rest of the analysis into a few cases.

\begin{align*}
    \begin{cases}
        Case\ 1: A\ne C \text{ and }A\ne D.
        \\
    A=C\text{ or }A=D.
    \text{ Assume A=C.}\footnotemark
    \begin{cases}
    Case\ 2: A=B\text{ or } A=D.
    \\
    A\ne B\text{ and } A\ne D.
        \begin{cases}
            Case\ 3: A,B\text{ are incomparable.}
            \\
            Case\ 4: A,D\text{ are incomparable.}
            \\
            Case\ 5: B\subsetneq A\text{ and } D\subsetneq A.
        \end{cases}
    \end{cases}
    \end{cases}
\end{align*}
\footnotetext{This is without loss of generality.}
Case 1: 
\newline
In this case, $A\cap C$ and $A\cap D$ are two proper subspace of $A$. By \cref{lem:cannotspanall}, $|A\backslash (C\cup D)|\ge |A|/4$, $d_{TV}(\bD_1,\bD_2)\ge w^*/4$.
\newline
Case 2: 
\newline
\WLOG, assume $A=B$. We have $dim(A)\ge dim(D)$ and $D\ne A$. Hence $A\cap D$ is a proper subspace of $A$. $|(\bD_1-\bD_2)(A\backslash D)|=(1-w_C) |A\backslash D|/|A|\ge w^* \cdot 1/2$.
\newline
Case 3:
\newline
If $B\subseteq D$, we have $B\subsetneq D$. Since $A,B$ are incomparable, $A,D$ are incomparable. $|(\bD_1-\bD_2)(D\backslash (A\cup B)|\ge w^*/4$. If $B\nsubseteq D$, $B\cap D$ is a proper subspace of $B$, $|(\bD_1-\bD_2)(B\backslash (A\cup D)|\ge w^*/4$.
\newline
Case 4: similar to Cases 3.
\newline
Case 5:
\newline
If $|w_A-w_C|\ge w^*/2$, then $|(\bD_1-\bD_2)(A\backslash(B\cup D))|=|w_A-w_C|\cdot|A\backslash(B\cup D))|/ |A|\ge w^*/2 \cdot 1/4$. If $|w_A-w_C|\le w^*/2$, \lowerwlog, assume $dim(B)\ge dim(D)$. Since $B\ne D$, $B\cap D$ is a proper subspace of $B$. $|(\bD_1-\bD_2)(B\backslash D)|=|(w_A-w_C)\cdot|B\backslash D|/ |A|+(1-w_A)|B\backslash D|/ |B| |\ge (1-w_A)|B\backslash D|/|B|-|(w_A-w_C)\cdot|B\backslash D|/ |A| |\ge w^*/2-w^*/2\cdot 1/2=w^*/4$.
\end{proof}

\begin{proof}[Proof of Theorem~\ref{thm:hypotest:main}]
Set $\epsilon=w_0/100,M=\ceil{1/\epsilon},\gamma=(1-w_0)/M$. Let $\bD_\epsilon=\mathset{\bD(A_j,B_j,w_0+k*\gamma,1-w_0-k*\gamma}_{j\in[N],k\in[M]\cup\mathset{0}}$. It is not hard to see that there exist $\bD^*\in \bD_\epsilon$ such that $d_{TV}(\bD^*,\bD)\le \epsilon$. By Proposition~\ref{prop:gnrl_hypo_test}, we can find $\bD'\in \bD_\epsilon$ such that $d_{TV}(\bD',\bD)\le 6\epsilon$ with probability $1-\delta$. Say $\bD'=\bD(A',B',w',1-w')$. We claim $\mathset{A',B'}=\mathset{A,B}$. For a contradiction, suppose it is not true. Then by Lemma~\ref{lem:hypo_dtv_seperation}, $d_{TV}(\bD',\bD)\ge w_0/8 > 6\epsilon$, we derive a contradiction.
\end{proof}

\section{Generalized Chernoff Bound}
\begin{lemma}\thlabel{lem:azumadependentchernoff}
Let $\gamma\in(0,1),d,k\in\mathbb{N}$. Let $\bx_1,\bx_2,\cdots,\bx_k$ be a sequence of random variables such that for all $i\in[k]$
\begin{align*}
    \bbP[(\bx_i=1) \lor (\bx_1+\bx_2+\cdots+\bx_{i-1}\ge d) | \bx_1,\cdots,\bx_{i-1}]\ge \gamma.
\end{align*}
Assume $k\ge 2d/\gamma$.
Then 
\begin{align*}
    \bbP[\bx_1+\cdots+\bx_k\ge d]\ge 1-\exp\penclose{-k\gamma^2/8}.
\end{align*}
\end{lemma}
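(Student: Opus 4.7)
My plan is to reduce the statement to a standard Azuma--Hoeffding bound via a coupling with an auxiliary sequence. Since the applications in the paper (Claim~\ref{lem:spanall}, Lemma~\ref{lem:minuspropersubspace_spanall}, and Claim~\ref{lem:spanofunion}) only invoke this lemma with $\{0,1\}$-valued rank-increment indicators, I will treat the $\bx_i$ as Bernoulli. The hypothesis then says: whenever the partial sum $S_{i-1}=\bx_1+\cdots+\bx_{i-1}$ is still below $d$, one has $\Pr[\bx_i=1\mid \bx_1,\dots,\bx_{i-1}]\ge \gamma$. The difficulty (such as it is) lies in handling what happens after $S_{i-1}$ reaches $d$, since the hypothesis gives no control on $\bx_i$ there; the rest is just a standard concentration argument.

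First I would define an auxiliary sequence $\by_1,\dots,\by_k$ on an enlarged probability space. Introduce independent Bernoulli$(\gamma)$ variables $\xi_1,\dots,\xi_k$ independent of everything else, and set $\by_i=\bx_i$ if $S_{i-1}<d$, and $\by_i=\xi_i$ if $S_{i-1}\ge d$. The two key properties of this coupling are:
\begin{enumerate}
\item[(i)] For every $i$, with $\mathcal{F}_{i-1}=\sigma(\bx_1,\dots,\bx_{i-1},\xi_1,\dots,\xi_{i-1})$, we have $\Pr[\by_i=1\mid \mathcal{F}_{i-1}]\ge\gamma$. Indeed, on $\{S_{i-1}<d\}$ the hypothesis gives $\Pr[\bx_i=1\mid\bx_1,\dots,\bx_{i-1}]\ge\gamma$, and on $\{S_{i-1}\ge d\}$ the variable $\by_i=\xi_i$ is Bernoulli$(\gamma)$ independent of $\mathcal{F}_{i-1}$.
\item[(ii)] The implication $\sum_{i=1}^k \by_i \ge d \;\Longrightarrow\; \sum_{i=1}^k \bx_i \ge d$ holds pointwise. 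To see this, consider the contrapositive: if $\sum_i\bx_i<d$, then $S_{i-1}<d$ for every $i\le k$, so by construction $\by_i=\bx_i$ for all $i$ and hence $\sum_i \by_i<d$.
\end{enumerate}

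Next I would apply Azuma--Hoeffding to the submartingale $M_i := \sum_{j=1}^i (\by_j-\gamma)$. Property (i) gives $\mathbb{E}[M_i-M_{i-1}\mid\mathcal{F}_{i-1}]\ge 0$, and the increments are bounded in absolute value by $1$. The one-sided Azuma--Hoeffding inequality then yields
\[
\Pr\Bigl[\,M_k\le -k\gamma/2\,\Bigr]\le \exp\!\bigl(-k\gamma^2/8\bigr).
\]
The assumption $k\ge 2d/\gamma$ is equivalent to $d\le k\gamma/2$, so $\{\sum_i\by_i<d\}\subseteq\{M_k\le -k\gamma/2\}$. Combining with (ii) gives
\[
\Pr\Bigl[\sum_{i=1}^k \bx_i\ge d\Bigr]\ge \Pr\Bigl[\sum_{i=1}^k \by_i\ge d\Bigr]\ge 1-\exp(-k\gamma^2/8),
\]
which is the claimed bound.

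The only real subtlety is the coupling construction in step~(i)--(ii); it is tempting to try to apply Azuma directly to $\bx_i$, but the hypothesis gives no control on $\bx_i$ once the partial sum reaches $d$, so replacing $\bx_i$ by an independent $\xi_i$ in that regime is essential. Once the coupling is in place, the concentration step is entirely routine.
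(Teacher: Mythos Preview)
Your proof is correct and follows essentially the same strategy as the paper: couple to an auxiliary $\{0,1\}$-sequence $\by_i$ that satisfies the conditional lower bound $\Pr[\by_i=1\mid \text{past}]\ge\gamma$ at every step, then apply Azuma--Hoeffding to the submartingale $\sum_{j\le i}(\by_j-\gamma)$. The only difference is cosmetic: the paper sets $\by_i=1$ deterministically once the partial sum reaches $d$ (so it stays on the original probability space and obtains the equivalence $\sum\bx_i\ge d\iff\sum\by_i\ge d$), whereas you introduce fresh independent $\mathrm{Bernoulli}(\gamma)$ variables $\xi_i$ in that regime and use only the one-sided implication, which is all that is needed.
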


\begin{proof}
We will use the coupling technique.
Define 
\begin{align*}
    \by_i=
    \begin{cases}
        1&\text{ if } \bx_1+\cdots+\bx_{i-1}\ge d.\\
        \bx_i&\text{ otherwise}.
    \end{cases}
\end{align*}
Then 
\begin{enumerate}
    \item $\bx_1+\cdots+\bx_k\ge d\iff \by_1+\cdots+\by_k\ge d$.
    \item For all $i\in[k]$,$\bbP[\by_i=1 | \by_1,\cdots,\by_{i-1}]\ge \gamma$.
\end{enumerate}
Define a submartingale $\bZ_0,\cdots, \bZ_k$ by $\bZ_0=0$ and $\bZ_j=\sum_{1\le l\le j}\by_l-j\gamma$. Then,
\begin{align*}
    &\bbP[\bx_1+\cdots+\bx_k\ge d]\\
    &=\bbP[\by_1+\cdots+\by_k\ge d] \\
    &=1-\bbP[\by_1+\cdots+\by_k\le d-1]\\
    &\ge 1-\bbP[\bZ_k-\bZ_0\le d-1-k\gamma]  \\
    &\ge 1-\exp\penclose{-\frac{(k\gamma-(d-1))^2}{2k}} &&\text{by Azuma–Hoeffding inequality}\\
    &\ge 1-\exp\penclose{-k\gamma^2/8}.&&\text{by $k\gamma\ge 2d$}\\
\end{align*}
\end{proof}
\end{document}